  \providecommand\BibTeX{{%
    \normalfont B\kern-0.5em{\scshape i\kern-0.25em b}\kern-0.8em\TeX}}}
\DeclareRobustCommand{\cev}[1]{%
  \mathpalette\do@cev{#1}%
}
\newcommand{\do@cev}[2]{%
  \fix@cev{#1}{+}%
  \reflectbox{$\m@th#1\vec{\reflectbox{$\fix@cev{#1}{-}\m@th#1#2\fix@cev{#1}{+}$}}$}%
  \fix@cev{#1}{-}%
}
\newcommand{\fix@cev}[2]{%
  \ifx#1\displaystyle
    \mkern#23mu
  \else
    \ifx#1\textstyle
      \mkern#23mu
    \else
      \ifx#1\scriptstyle
        \mkern#22mu
      \else
        \mkern#22mu
      \fi
    \fi
  \fi
}
\begin{document}

\title{Tight Localizations of Feedback Sets}

\author{Michael Hecht}
\orcid{https://orcid.org/0000-0001-9214-8253}
\authornotemark[2]
\email{hecht@mpi-cbg.de}

\authornote{All authors contributed equally to this research.}

\author{Krzysztof Gonciarz}
\orcid{https://orcid.org/0000-0001-9054-8341}
\authornotemark[2]
\email{gonciarz@mpi-cbg.de}

\author{Szabolcs Horv\'at}
\orcid{https://orcid.org/0000-0002-3100-523X}
\email{horvat@mpi-cbg.de}
\authornotemark[2]
\authornotemark[3]

\affiliation{%
  \institution{ \\ $\dagger$ Max Planck Institute of Molecular Cell Biology and Genetics, Center for Systems~Biology~Dresden}
  \streetaddress{Pfotenhauerstrasse 108}
  \city{Dresden}
  \state{Germany}
  \postcode{01307}
}

\affiliation{%
  \institution{ \\ $\ddagger$ Max Planck Institute for the Physics of Complex Systems}
  \streetaddress{N\"{o}thnitzerstrasse 38}
  \city{Dresden}
  \state{Germany}
  \postcode{01187}
}

\renewcommand{\shortauthors}{Hecht, Gonciarz, Horv\'at}

\newcommand\eg{\textit{e.g.\ }}
\newcommand\etc{\textit{etc}}
\newcommand\p{\partial}
\newcommand\Oc{\mathcal{O}}

\newcommand{\R}{\mathbb{R}}
\newcommand{\A}{\mathbb{A}}
\newcommand{\M}{\mathbb{M}}
\newcommand{\T}{\mathbb{T}}
\newcommand{\LL}{\mathscr{L}}

\newcommand{\Ee}{\mathbb{E}}
\newcommand{\Ff}{\mathbb{F}}
\newcommand{\Hm}{\mathbb{H}}
\newcommand{\X}{\mathbb{X}}
\newcommand{\PP}{\mathbb{P}}
\newcommand{\mS}{\mathbb{S}}
\newcommand{\N}{\mathbb{N}}
\newcommand{\W}{\mathbb{W}}
\newcommand{\VV}{\mathbb{V}}
\newcommand{\KK}{\mathbb{K}}
\newcommand{\I}{\mathbb{I}}
\newcommand{\Q}{\mathbb{Q}}
\newcommand{\Z}{\mathbb{Z}}
\newcommand{\Rn}{\mathbb{R}^{2n}}
\newcommand{\Tn}{\mathbb{T}^{2n}}
\newcommand{\Zn}{\mathbb{Z}^{2n}}
\newcommand{\ee}{\varepsilon}
\newcommand{\al}{\alpha}
\newcommand{\OUT}{\mathrm{OUT}}
\newcommand{\IN}{\mathrm{IN}}
\newcommand{\TV}{\mathcal{T}_V}
\newcommand{\Rc}{\mathcal{R}}
\newcommand{\TVd}{\mathcal{T}_{V,\ee}}
\newcommand{\GV}{\mathcal{G}_V}
\newcommand{\GVd}{\mathcal{G}_{V,\ee}}
\newcommand{\D}{\mathcal{D}}
\newcommand{\Sc}{\mathcal{S}}
\newcommand{\Gc}{\mathcal{G}}
\newcommand{\HH}{\mathcal{H}}
\newcommand{\Tt}{\mathcal{T}}
\newcommand{\MM}{\mathcal{M}}
\newcommand{\FF}{\mathcal{F}}
\newcommand{\Ec}{\mathcal{E}}
\newcommand{\Vc}{\mathcal{V}}
\newcommand{\WW}{\mathcal{W}}
\newcommand{\mm}{\mathcal{m}}
\newcommand{\J}{\mathcal{J}}
\newcommand{\Pc}{\mathcal{P}}
\newcommand{\V}{\mathcal{V}}
\newcommand{\Ic}{\mathcal{I}}
\newcommand{\Hc}{\mathscr{H}}
\newcommand{\Lc}{\mathcal{L}}
\newcommand{\Kc}{\mathcal{K}}
\newcommand{\Uc}{\mathcal{U}}
\newcommand{\Bb}{\mathcal{B}}
\newcommand{\Cc}{\mathcal{C}}
\newcommand{\Yc}{\mathcal{Y}}

\newcommand{\lo}{\longrightarrow}
\newcommand{\li}{\left}
\newcommand{\re}{\right}
\newcommand{\mi}{\,\,\big|\,\,}
\newcommand{\eq}{\Longleftrightarrow}
\newcommand{\cupp}{\mathop{\cup}\limits}
\newcommand{\capp}{\mathop{\cap}\limits}
\newcommand{\indeg}{\mathrm{indeg}}
\newcommand{\outdeg}{\mathrm{outdeg}}
\newcommand{\dist}{\mathrm{dist}}
\newcommand{\Os}{\mathrm{O}_{\mathrm{si}}}
\newcommand{\Oe}{\mathrm{O}_{\mathrm{el}}}
\newcommand{\Oee}{\mathrm{O}^0_{\mathrm{el}}}
\newcommand{\Ps}{\mathrm{P}_{\mathrm{si}}}
\newcommand{\Pe}{\mathrm{P}_{\mathrm{el}}}
\newcommand{\el}{\mathrm{el}}
\newcommand{\si}{\mathrm{si}}
\newcommand{\spann}{\mathrm{span}}
\newcommand{\head}{\mathrm{head}}
\newcommand{\tail}{\mathrm{tail}}
\newcommand{\crit}{\mathrm{crit}}
\newcommand{\link}{\mathrm{link}}
\newcommand{\ran}{\mathrm{ran}}
\newcommand{\im}{\mathrm{im}}
\newcommand{\rk}{\mathrm{rank}}
\newcommand{\Iso}{\mathrm{Iso}}
\newcommand{\Aut}{\mathrm{Aut}}
\newcommand{\rank}{\mathrm{rank}}

\newcommand{\co}{\color{red}}

\newtheorem{experiment}{Experiment}[section]
\newtheorem{remark}{Remark}[section]
 \newtheorem{problem}{Problem}

\begin{abstract}
The classical NP--hard \emph{feedback arc set problem} (FASP) and  \emph{feedback vertex set problem} (FVSP) ask for a minimum set of arcs $\ee \subseteq E$ or vertices $\nu \subseteq V$ whose removal
$G\setminus \ee$, $G\setminus \nu$ makes a given multi--digraph $G=(V,E)$ acyclic, respectively.
Though both problems are known to be APX--hard, approximation algorithms or proofs of inapproximability are unknown.
We propose a new $\mathcal{O}(|V||E|^4)$--heuristic for the directed FASP.
While a ratio of $r \approx 1.3606$ is known to be a lower bound for the APX--hardness, at least by empirical validation we achieve an approximation of $r \leq 2$.
The most relevant applications, such as \emph{circuit testing}, ask for solving the FASP on large sparse graphs, which can be done
efficiently within tight error bounds due to our approach. 

\end{abstract}

%

\keywords{Minimum feedback arc set problem, minimum feedback vertex set problem, maximum linear ordering problem, approximation algorithm}


\maketitle

\section{Introduction}

Belonging to R.~M.~Karp's famous list of 21 NP--complete problems \cite{Karp:1972}, the FVSP \& FASP are of central interest in
theoretical computer science and beyond.

The most relevant applications  occur in \emph{electronic engineering} for \emph{designing processors or computer chips}. The chip design can be represented by a directed graph $G$, where the direction indicates the possible communication between the
chip components. Consistent testing or simulation of the signal process requires to consider sub-designs of feed-forward communication.
These sub-designs can be represented by acyclic subgraphs $G'\subseteq G$, which may be derived by solving the FASP. Especially, \emph{circuit testing} \cite{chip,gupta,VLSI,kunzmann,leiserson,global,unger}, including of
\emph{field programmable gate arrays (FPGAs)} \cite{FPGA2,FPGA,mars} rely on this approach.
Further problems and applications include \emph{efficient deadlock resolution in operating systems} \cite{logic,Silber},
\emph{minimum transversals of directed cuts} \cite{Lucchesi:1978} and \emph{general minimum multi-cuts} \cite{Even:1998}, \emph{ computational biology and
neuroscience} \cite{bao,greedy,Crick1998,Markov2014}. We recommend \cite{bang,marti} for exploring further relations to graph theoretical problems.
It is notable that the typical instances of the  mentioned applications are represented by graphs of \emph{large and sparse nature}.

While the undirected version of the FASP can be solved efficiently by computing  a maximum spanning tree
the undirected FVSP remains NP--complete. The only known (directed) instance classes possessing polynomial time solutions are planar or more general
weakly acyclic graphs \cite{Groetschel:1985}. Parameter tractable algorithms are given in \cite{Chen,FASP}. Further, by $L$--reduction of the \emph{minimum vertex cover problem} (MVCP) both problems are known to be APX--hard \cite{Karp:1972} and inapproximable beneath a ratio of
$r \approx 1.3606$  \cite{Dinur}, unless P=NP.
The undirected FVSP can be approximated within ratio $r=2$ \cite{bafna,becker} and is thereby APX--complete. The FASP on tournaments possesses a PTAS \cite{tour}. We recommend \cite{bang} for further studies.
That the directed FVSP \& FASP are approximation preserving $L$-reducible to each other is known due to
\cite{ausiello,crescenzi1995compendium,Even:1998,kann1992}. In our previous work \cite{FASP} we compactified these constructions and recapture them in Appendix \ref{Dual}.

The complementary problem of finding the \emph{maximum acyclic subgraph} is known to be MAX SNP complete and thereby approximable \cite{hassin,marti}. However,
this fact is not sufficient to approximate the directed FVSP \& FASP.
Though approximations of constrained versions of the problems \cite{Even:1998} were delivered, these approximations depend logarithmically on the number of cycles a graph possesses.
By reduction from the  Hamiltonian cycle problem,  counting all cycles is already a
\#P--hard counting problem \cite{Arora} and thereby the proposed approximation is not bounded constantly.

In \cite{Eades1993,marti,Saab} an excellent overview of heuristic solutions is given. Further, \cite{Eades1993} proposes the most common state of the art heuristic termed \emph{Greedy Removal} (GR).
Exact methods use ILP--solvers with modern formulations given in \cite{exact,sagemath} based on the results of \cite{Groetschel:1985,younger}. Heuristic and exact solutions for the weighted version are discussed in \cite{flood}.
However, for dense or large (sparse) graphs the 
ILP--approaches become sensitive to the NP--hardness of the FASP and require infeasible runtimes  while the
heuristic approach GR performs inaccurately. As we present in this article, our proposed heuristic solution can fill this gap and produce reasonable results.

\section{Theoretical Considerations}
In this section we provide the main graph theoretical concepts, which are required throughout the article.

\subsection{Preliminaries}

We address the feedback arc set problem in the most general setup. For this purpose, we introduce a non--classical definition of graphs as follows.

\begin{definition} \label{graph} Let $G=(V,E,\head,\tail)$ be a $4$--tuple, where $V,E$ are finite sets and $\head, \tail : E \lo V$ are some maps. We call the elements $v \in V$  \emph{vertices} and the elements $e \in E$  \emph{arcs} of $G$, while
$\head(e),\tail(e) \in V$ are called \emph{head} and \emph{tail} of the arc $e$. An arc $e$ with $\head(e)=\tail(e)$ is called a \emph{loop}.  In general, we call $G$ a \emph{multi--digraph}. The following cases are often relevant:
\begin{enumerate}
 \item[i)] $G$ is called a \emph{digraph} iff the map $H : E \lo V\times V$, with  $H(e)=(\tail(e),\head(e))$ is injective.
 \item[ii)] $G$ is called an \emph{undirected graph} iff $G$ is a digraph and for every $e \in E$ there is $f \in E\setminus\{e\}$ with $\head(e)=\tail(f)$, $\tail(e)=\head(f)$. In this case, we slightly simplify notation by
 shortly writing $e$ for the pair $e:=(e,f)\in E \times E$, which is then called an \emph{edge}. The notion of $\head,\tail$ can thereby be replaced by $\link: E \lo V$ with  $\link(e)=\head(e)\cup\tail(e)$.
\item[iii)] In the special case, were $E \subseteq V\times V$ the maps $\head,\tail$ are assumed to be canonically given by the relation of $E$, i.e., $\head((x,y))=y$, $\tail((x,y))=x$ for all $(x,y) \in E$.
\end{enumerate}
\end{definition}

One readily observes that, in the cases $i),ii)$, our definition coincides with the common understanding of graphs. In the general case of multi--digraphs, our definition has the advantage that though
\emph{multiple arcs} $e,f$ with $\head(e)=\head(f)$, $\tail(e)=\tail(f)$ are allowed $e,f$ are distinguished. Thus, $E$ is no multi--set, as it is assumed usually, but a simple set, simplifying our considerations.
For $e \in E$ we denote with $\vec F(e) = \li\{f \in E \mi \head(f)=\head(e), \tail(e)=\tail(f)\re\}$, $\cev F(e) = \big\{f \in E \mi \tail(f)=\head(e)$, $\head(f)=\tail(e)\big\}$,
 $F = \vec F(e) \cup \cev F(e)$ the set
of all parallel and anti--parallel arcs and their union, respectively.

Further, two arcs $e$ and $f$ are called \emph{consecutive} if
$\head(e)=\tail(f)$ and are called \emph{connected} if $\{\head(e),\tail(e)\}\cap
\{\head(f),\tail(f)\} \not =\emptyset$.
A \emph{directed path} $p=\{e_1,\dots,e_n\} \subseteq E$ of length $n \in \N$ from a vertex $u$ to a vertex $v$ is a list of
consecutive arcs $e_i \in E$, $i =1, \dots,n$ such that $u = \tail(e_1)$ and $v=\head(e_n)$. Thereby, repetition is allowed, i.e., $e_i =e_j$, $1\leq i<j\leq n$ is possible.

\begin{figure}[t!]
 \centering
 \vspace{-0.6cm}
\includegraphics[width=13cm]{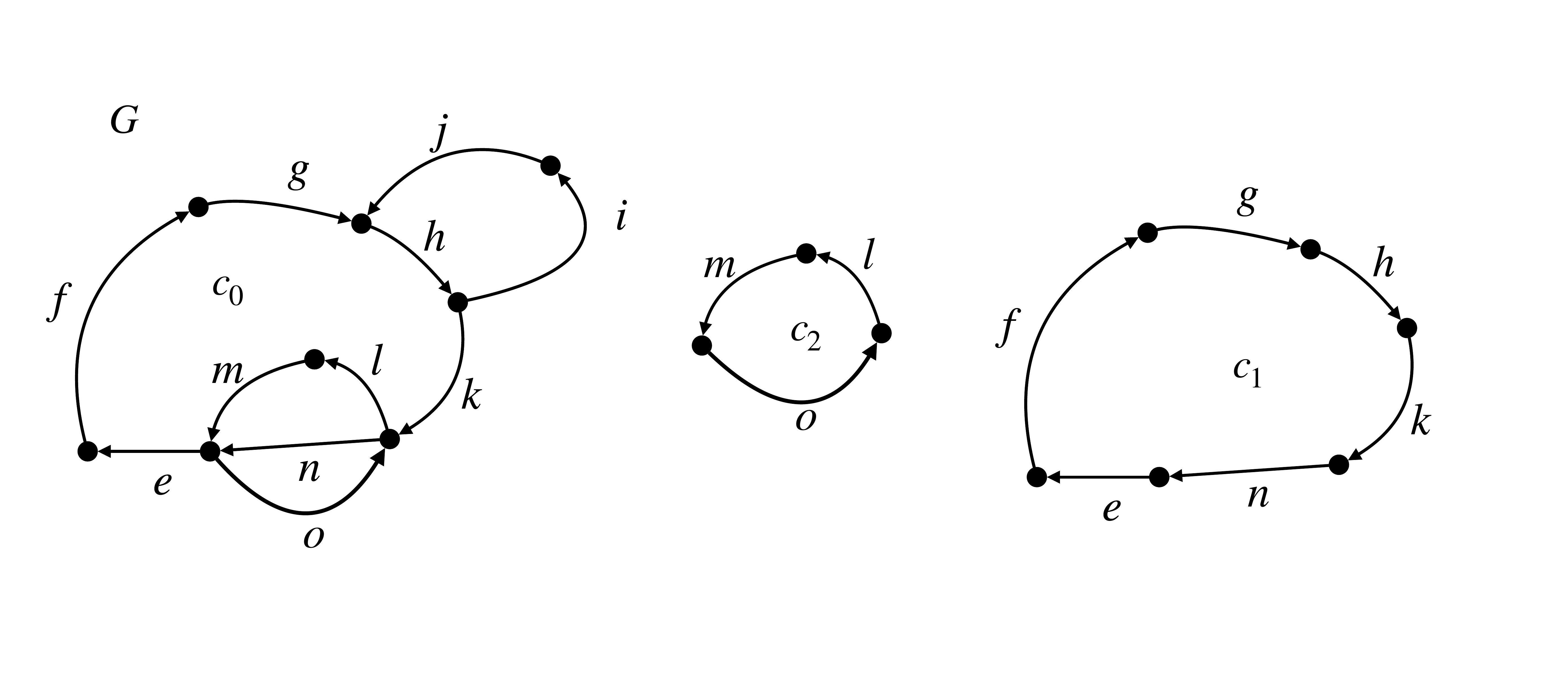}
  \vspace{-1.7cm}
 \caption{Elementary and simple cycles.}
 \label{fig:Ge}
\end{figure}

A directed  \emph{cycle} is a directed path $p$ from some vertex $v \in V$ to itself, which can also be a \emph{loop}. $O(G)$ shall denote the sets of all directed  cycles of $G$.
A cycle is called \emph{simple} or \emph{elementary} if every arc or vertex it contains is passed exactly once, respectively. Certainly, every cycle is given by passing through several elementary cycles. We denote with $\Oe(G) \subseteq O(G)$,
the set of all \emph{directed elementary cycles}.

\begin{example}
\label{exa:elementary}
Consider the graph $G$ in Figure~\ref{fig:Ge}. The cycle $c_0=\{e,f,g,h,k,l,m,o,n\}$
is a simple and non-elementary cycle, while the cycles $c_1=\{e,f,g,h,k,n\}$ and  $c_2=\{o,l,m\}$ are
elementary. Certainly, $c_0$ is given by passing through $c_1$ and $c_2$.
\end{example}

With  $G\setminus e$, $G\setminus v$ we denote the graphs obtained by deleting the arc $e$ or the vertex $v$ and all its connected arcs. Further,
$\Gc(\cdot)$, $\Ec(\cdot)$, $\Vc(\cdot)$ denote the
graph, the set of all arcs, and the set of all vertices
induced by a set of graphs, arcs and vertices.
By $\Pc(A)$ we denote the power set of a given  set $A$ of finite cardinality $|A|\in \N$.

\subsection{Problem Formulation}
In the following we formulate the classical optimization problems considered in this article.

\begin{problem}[FASP \& FVSP]\label{def:fasp}
Let $G =(V,E,\head,\tail)$ be  a multi--digraph and
$\omega: E\lo \R^+$ be an arc weight function.  Then the \emph{weighted FASP} is to
find a  set of arcs $ \ee \in \Pc(E)$  such that $G \setminus \ee$ is acyclic,
i.e., $O(G \setminus \ee) = \emptyset$ and
\begin{equation}
  \Omega_{G,\omega}(\ee):=\sum_{e \in \ee} \omega(e)
\end{equation}
is minimized. The weighted minimum feedback vertex set problem (FVSP) is given by considering a vertex weight function $\psi: V\lo \R^+$ and ask for a
set of vertices $ \nu \in \Pc(V)$  such that $G \setminus \nu$ is acyclic,
i.e., $O(G \setminus \nu) = \emptyset$ and
\begin{equation}
  \Psi_{G,\psi}(\nu):=\sum_{v \in \nu} \psi(v)
\end{equation}
is minimized.
We denote the set of solutions of the FASP \& FVSP with $\mathcal{F}_E(G,\omega), \mathcal{F}_V(G,\psi)$, respectively.

Further, we call $\ee \in \mathcal{F}_E(G,\omega)$  a \emph{minimum feedback arc set} and $\nu \in \mathcal{F}_V(G,\psi)$  a \emph{minimum feedback vertex set}
and denote with $\Omega(G,\omega)$, $\Psi(G,\psi)$ the \emph{minimum feedback arc/vertex length}.
If $\omega$ or $\psi$ are constant functions then we derive the unweighted versions of the FASP \& FVSP, respectively.
\end{problem}

\begin{remark}
 Note that, checking whether a graph is acyclic or not can be done by \emph{topological sorting} in $\Oc(|E|)$--time
\cite{cormen,kahn,tarjan_sort}. Further, every directed cycle is given by passing through several elementary cycles. Thus, the conditions $O(G) = \emptyset$ and $\Oe(G) = \emptyset$ are equivalent.
The FVSP \& FASP can be  also formulated in terms of the \emph{maximum linear ordering problem} see for instance \cite{exact,marti,younger}.
\end{remark}

\begin{figure}[t!]
 \centering
 \vspace{-0.25cm}
\includegraphics[width=12cm]{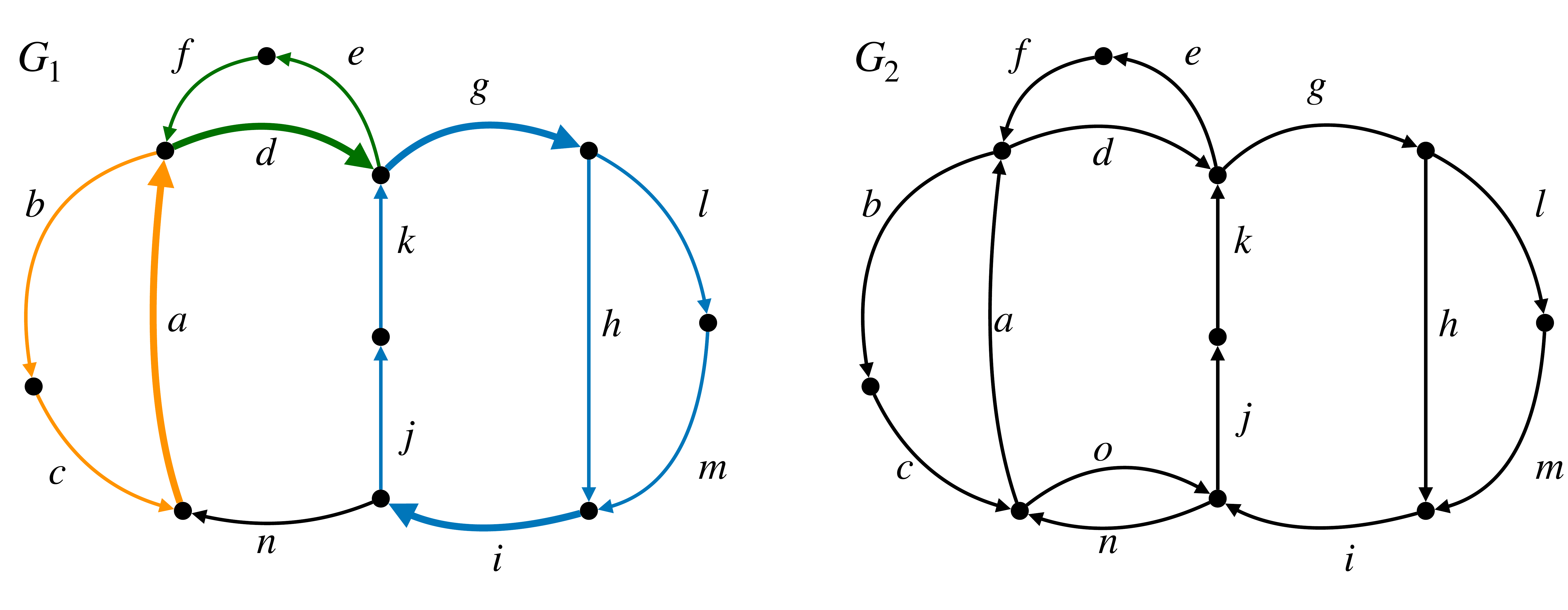}
\vspace{-0.5cm}
 \caption{$G_1$ with colored isolated cycles $\{a,b,c\}$, $\{d,e,f\}$, $\{g,h,i,j,k\}$, $\{g,l,m,i,j,k\}$ and $G_2$.}
 \label{fig:iso}
\end{figure}

\subsection{Isolated Cycles}\label{sec:Iso}

The complexity of an instance $G$ for the FVSP or FASP is certainly correlated to the structure of its cycles.
However, by reducing to the Hamiltonian cycle problem,
already counting all directed elementary cycles turns out to be a \#P-hard problem. This makes it hard to study the structure of $\Oe(G)$.
Here, we propose to use a technique developed in our previous article \cite{FASP} to overcome this issue.

\begin{definition}[cycle cover \& isolated cycles] \label{def:iso}
Let $G=(V,E, \head,\tail)$ be a multi--digraph and $e \in E$.  We call the subgraph
\begin{equation}
 G_e=(V_e,E_e, \head,\tail):= \Gc\li(\li\{ c\in \Oe(G) \mi \vec F(e) \cap  \Ec(c)\not = \emptyset \re\}\re)
\end{equation}
induced by all elementary cycles passing through $e$ or a parallel arcs $f \in \vec F(e)$ the \emph{cycle cover} of $e$.
Further, we denote with
$$I_e: =\Gc\li(\li\{ c \in \Oe(G) \mi  e \in \Ec(c) \,, \quad  c\cap c^\prime  = \emptyset \,, \,\,\, \text{for all}\,\,\, c^\prime
\in \Oe(G \setminus \vec F(e) \re\}\re)$$
the subgraph induced by all \emph{isolated cycles} passing through $e$, i.e., if $c$ is isolated then $c$ intersects with no cycle $c'$ passing not through $e$ or some parallel arc of $e$.
\end{definition}

\begin{remark}\label{rem:hierachy}
Note, that every loop is an isolated cycle. Further, the sets of isolated cycles possess a \emph{flat hierarchy} in the
following sense. If $e,f\in E$, with  $I_e\not = I_f$ then $I_e\cap I_f = \emptyset$.
Vice versa $I_e\cap I_f \not = \emptyset$  implies $I_e = I_f$ and further $G_e=G_f$.
\end{remark}

\begin{example}\label{ex:iso} Consider the graph $G_1$ in Figure~\ref{fig:iso}. Then $a,d,g,i$ are all arcs of $G_1$ with $I_x\not=\emptyset$, $x \in \{a,d,g,i\}$. The corresponding isolated cycles are colored.
Indeed, $I_g=I_i$ and $G_g=G_i$ holds according to Remark \ref{rem:hierachy}. Adding the arc $o$ to $G_1$ yields the graph $G_2$. Thereby,  all isolated cycles are connected with each other.
Thus,  $I_e=\emptyset$ for all arcs $e$ of $G_2$.
\end{example}

\pagebreak
\begin{theorem} \label{prop:Iso} Let $G=(V,E,\head,\tail)$ be a multi--digraph with arc weight $\omega : E \lo \R^+$.
\begin{enumerate}
\item[i)] There exist algorithms computing the subgraph $G_e$ in $\Oc(|E|^2+|V|)$ and $I_e$ in $\Oc(|E|+|V|)$.
 \item[ii)] If $\ee \in \mathcal{F}_E(G,\omega)$ is a minimum feedback arc set with  $e \in \ee$ then $\vec F(e) \subseteq \ee$.
 \item[iii)] If $I_e\not = \emptyset$  and  $\delta=\mathrm{mincut}(\head(e),\tail(e),I_e,\omega_{| I_e}) \in \Pc(E) $ be a minimum--$s$--$t$--cut with $s = \head(e)$, $t =\tail(e)$  w.r.t. $I_e, \omega_{|I_e}$ such that:
\begin{equation}\label{MC}
 \Omega_{G,\omega}(\delta) \geq \Omega_{G,\omega}(\vec F(e)) \,.
\end{equation}
Then there is $\ee \in \mathcal{F}_E(G,\omega)$ with $\vec F(e) \subseteq \ee$.
\item[iv)] Checking whether $I_e \not =\emptyset$ and \eqref{MC} holds can be done in $\Oc(|E||V|)$.
\end{enumerate}
\end{theorem}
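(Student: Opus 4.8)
The plan is to treat the four claims separately, with (ii) and (iii) carrying the structural content and (i),(iv) being algorithmic. For (ii) I would argue by a parallel-arc exchange. Suppose $\ee$ is a minimum feedback arc set with $e\in\ee$ but some $f\in\vec F(e)$, $f\neq e$, has $f\notin\ee$. Since $\omega$ is strictly positive, $\ee$ is minimal, so $G\setminus(\ee\setminus\{e\})$ must contain a cycle, and as $G\setminus\ee$ is acyclic this cycle $C$ must traverse $e$. Because $f$ is parallel to $e$ (so $\head(f)=\head(e)$, $\tail(f)=\tail(e)$), replacing every occurrence of $e$ in $C$ by $f$ yields a closed walk $C'$ whose arcs all lie in $E\setminus\ee$ (the non-$e$ arcs of $C$ avoid $\ee$, and $f\notin\ee$). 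Then $C'$ is a cycle in $G\setminus\ee$, contradicting acyclicity. Hence $\vec F(e)\subseteq\ee$.

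For (iii) I would set $s=\head(e)$, $t=\tail(e)$ and exploit that $I_e$ is a self-contained region, resting the argument on two facts. The \emph{isolation} fact: every elementary cycle of $G$ that avoids $\vec F(e)$ is vertex-disjoint from $I_e$ --- for if such a cycle met $I_e$ in a vertex $v$, then $v$ lies on some isolated cycle $c$ through $e$, and $c$ would share $v$ with a cycle avoiding $\vec F(e)$, contradicting the defining property in Definition~\ref{def:iso}. The \emph{local lower bound} fact: any $L\subseteq\Ec(I_e)$ for which $I_e\setminus L$ is acyclic satisfies $\Omega_{G,\omega}(L)\geq\Omega_{G,\omega}(\vec F(e))$; indeed either $\vec F(e)\subseteq L$, or some $f\in\vec F(e)\setminus L$ survives, in which case no $s$-$t$ path may remain in $I_e\setminus L$ (a surviving path would close up with $f$ into a cycle), so $L$ contains an $s$-$t$ cut and $\Omega_{G,\omega}(L)\geq\Omega_{G,\omega}(\delta)\geq\Omega_{G,\omega}(\vec F(e))$ by \eqref{MC}. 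Granting these, I would take any minimum feedback arc set $\ee^*$; its trace $L=\ee^*\cap\Ec(I_e)$ makes $I_e$ acyclic, so $\Omega_{G,\omega}(L)\geq\Omega_{G,\omega}(\vec F(e))$. Defining $\ee'=(\ee^*\setminus\Ec(I_e))\cup\vec F(e)$ and noting $\vec F(e)\subseteq\Ec(I_e)$, one gets $\Omega_{G,\omega}(\ee')=\Omega_{G,\omega}(\ee^*)-\Omega_{G,\omega}(L)+\Omega_{G,\omega}(\vec F(e))\leq\Omega_{G,\omega}(\ee^*)$. Finally I check feasibility: any cycle in $G\setminus\ee'$ avoids $\vec F(e)\subseteq\ee'$, hence by the isolation fact is vertex-disjoint from $I_e$, so all its arcs lie outside $\Ec(I_e)$, where $\ee'$ and $\ee^*$ agree --- contradicting that $\ee^*$ is a feedback arc set. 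Thus $\ee'$ is a minimum feedback arc set containing $\vec F(e)$.

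For the algorithmic claims I would proceed as follows. For $I_e$ in (i): form $G\setminus\vec F(e)$, extract its strongly connected components (Tarjan), mark every vertex lying in a nontrivial component as forbidden, and then collect the arcs and vertices on some $s$-$t$ path through only unforbidden vertices; since the unforbidden part is acyclic this reduces to two reachability passes, giving $\Oc(|E|+|V|)$. For $G_e$: within the strongly connected component of $e$, determine the arcs lying on a cycle through $\vec F(e)$; testing this arc-by-arc (via the reachability/internally-disjoint-path condition certifying membership on an elementary cycle through $e$) costs $\Oc(|E|)$ per arc and $\Oc(|E|^2+|V|)$ overall. For (iv): build $I_e$ via (i) in $\Oc(|E|+|V|)$, test $I_e\neq\emptyset$, and decide \eqref{MC} by a single maximum-flow/minimum-cut computation from $s$ to $t$ on $(I_e,\omega_{|I_e})$; using an $\Oc(|V||E|)$ max-flow routine and comparing the cut value with $\Omega_{G,\omega}(\vec F(e))$ yields the claimed $\Oc(|E||V|)$ bound.

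The main obstacle is the feasibility step in (iii): it is not a priori clear that rerouting the solution inside $I_e$ to $\vec F(e)$ cannot open a new cycle elsewhere, and the whole argument hinges on the isolation fact to confine the effect to $I_e$. Establishing that fact cleanly from Definition~\ref{def:iso}, together with the cut-versus-$\vec F(e)$ dichotomy that makes the local exchange weight-nonincreasing, is where the care is needed. The complexity claims in (i) and (iv) are then comparatively routine, the only delicate point being the exact treatment of the \emph{elementary} (as opposed to general) cycle condition when computing $G_e$.
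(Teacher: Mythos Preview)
Your argument is essentially correct and in fact considerably more explicit than the paper's own proof of (iii), which merely asserts that $\vec F(e)\in\mathcal{F}_E(I_e,\omega_{|I_e})$ and then appeals, without detail, to an inequality $\Omega(G\setminus\delta,\omega)\geq\Omega(G\setminus\vec F(e),\omega)$. Your exchange $\ee'=(\ee^*\setminus\Ec(I_e))\cup\vec F(e)$, together with the local lower bound and the feasibility check via the isolation property, is precisely the clean way to complete that sketch; the paper is doing the same thing in spirit but leaves the feasibility step implicit.

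One point needs adjustment. In this paper cycles are arc sets (``a directed path $p=\{e_1,\dots,e_n\}\subseteq E$''), so the condition $c\cap c'=\emptyset$ in Definition~\ref{def:iso} is \emph{arc}-disjointness, not vertex-disjointness. Your isolation fact as stated is therefore slightly too strong: an isolated cycle may share a vertex with some $c'\in\Oe(G\setminus\vec F(e))$. Fortunately your feasibility argument only needs the arc version --- a surviving elementary cycle $c'$ in $G\setminus\ee'$ avoids $\vec F(e)$, hence is arc-disjoint from every isolated cycle, hence from $\Ec(I_e)$, and so is already cut by $\ee^*$ --- so (iii) goes through unchanged once you weaken the statement. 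The same correction affects your algorithm for $I_e$ in (i): forbidding \emph{vertices} lying in nontrivial components of $G\setminus\vec F(e)$ can discard genuine isolated cycles that merely touch such a vertex. The paper instead removes the \emph{arcs} $H(e)=\{f\mid f$ lies on some cycle of $G\setminus e\}$ (detected by one SCC pass) and then takes the SCC of $(G\setminus H(e))\cup\{e\}$ containing $e$; this respects the arc-based definition and stays within $\Oc(|E|+|V|)$. Your treatment of (ii), of $G_e$, and of (iv) matches the paper.
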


\begin{proof} For better readability we recaptured this statement from our previous article \cite{FASP} in Appendix \ref{app:FASP}.
\end{proof}
Theorem \ref{prop:Iso} allows to localize optimal arc cuts $e \in E$ even in the weighted case, whenever isolated cycles with property $iii)$ exist.
We will use this circumstance  to propose an algorithm solving the FASP.

\section{The Algorithm}
The building block of the algorithm relies on applying Theorem \ref{prop:Iso} as presented below.  

\subsection{Building Block}

Given a multi--digraph $G=(V,E,\head,\tail)$ we formulate an algorithm termed ISO--CUT, which searches for arcs $e \in E$ which satisfy the assumption $iii)$ of Theorem \ref{prop:Iso}.
If such an arc $e \in E$ is located we store $e$ in a list $\ee$, consider $G=G\setminus e$ and continue the search until either the resulting graph $G$ is acyclic or no desired arc can be localized.
In any case, the stored arcs $\ee$ are an optimal subsolution for the FASP on $G$. A formal pseudo-code for the algorithm is given in Algorithm \ref{alg:ISO}.

\begin{algorithm}[t!]
\caption{ISO-CUT}\label{alg:ISO}
{\footnotesize
\begin{algorithmic}[1]
\Procedure{ISO-CUT}{$G,\omega$}\Comment{$\omega$ is an arc weight.}
        \State $\ee=\emptyset$, $\mathrm{iso}=1$
        \While{$G$ is not acyclic \& $\mathrm{iso}=1$} \Comment{Check by topological sorting in $\Oc(|E|)$.}
        \State $\mathrm{iso}=0$
 \For{ $e \in E$}
        \State Compute $I_e$ and $\delta=\mathrm{mincut}(\head(e),\tail(e),I_e,\omega_{| I_e})$
        \vspace{0.1cm}
        \If{$I_e\not= \emptyset$ \& $\Omega_{G,\omega}( \vec F(e)) \leq \Omega_{G,\omega}(\delta)$} \Comment{2nd condition redundant if $G$ is a digraph with $\omega \equiv 1$.}
        \State $G =G\setminus e$
        \State $\ee=\ee \cup \vec F(e)$
        \State $\mathrm{iso}=1$
        \EndIf
      \EndFor
      \EndWhile
      \State \textbf{return} $(G,\ee)$
\EndProcedure
\end{algorithmic}
}
\end{algorithm}

\begin{lemma}\label{lemma:Isocut} Let $G=(V,E,\head,\tail)$ be a multi--digraph with arc weight $\omega : E \lo \R^+$.
\begin{enumerate}
\item[i)] The algorithm ISO--CUT requires $\Oc(|E|^3+|V||E|^2)$ runtime to return an optimal subsolution $\ee\subseteq E$ of the FASP on $G$ and the remaining graph $G\setminus \ee$ in the unweighted case.
\item[ii)] The analogous return in the weighted case requires $\Oc(|V||E|^3)$ runtime.
\item[iii)] If $G\setminus \ee$ is acyclic then $\ee \in \mathcal{F}_E(G,\omega)$ is a minimum feedback arc set.
\end{enumerate}
\end{lemma}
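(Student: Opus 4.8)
The plan is to treat the two independent assertions separately: the \emph{correctness} claim (that the returned $\ee$ is always an optimal subsolution, which subsumes statement $iii)$) and the \emph{runtime} bounds $i)$ and $ii)$. For correctness I would maintain, throughout the execution, the invariant that the accumulated set $\ee$ is contained in some minimum feedback arc set $\nu^\ast\in\mathcal{F}_E(G,\omega)$ of the \emph{original} graph $G$; equivalently $\Omega(G,\omega)=\Omega_{G,\omega}(\ee)+\Omega(G\setminus\ee,\omega)$. This is exactly what "optimal subsolution" means, since any minimum feedback arc set of the residual graph then completes $\ee$ to a global optimum, and it is what statements $i)$ and $ii)$ assert about the output.

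I would prove the invariant by induction on the number of executed cut operations, writing $G'=G\setminus\ee$ for the current working graph (note that in the digraph case $\vec F(e)=\{e\}$, so the removal performed by the algorithm is consistent with this). The base case $\ee=\emptyset$ is trivial. For the inductive step, assume $\ee\subseteq\nu^\ast$ and let $e$ be the next arc cut, so that $I_e\neq\emptyset$ and \eqref{MC} hold relative to $G'$. First I would observe that $\nu^\ast\setminus\ee$ is a minimum feedback arc set of $G'$: it is a feedback arc set because $G\setminus\nu^\ast$ is acyclic, and it is minimum because a strictly lighter feedback arc set of $G'$ could be glued to $\ee$ to beat $\nu^\ast$ on $G$. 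Now Theorem~\ref{prop:Iso}~$iii)$ applied to $G'$ yields a minimum feedback arc set $\rho$ of $G'$ with $\vec F(e)\subseteq\rho$. Since $\rho$ and $\nu^\ast\setminus\ee$ are both minimum for $G'$ they carry equal weight, so $\nu'=\ee\cup\rho$ is a feedback arc set of $G$ of weight $\Omega_{G,\omega}(\ee)+\Omega_{G',\omega}(\rho)=\Omega(\nu^\ast)$, hence a global optimum containing $\ee\cup\vec F(e)$. This re-establishes the invariant. Statement $iii)$ then follows at once: if the returned $G\setminus\ee$ is acyclic, $\ee$ is itself a feedback arc set, and being contained in a minimum one it must equal it in weight, so $\ee\in\mathcal{F}_E(G,\omega)$.

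For the runtime I would first bound the number of passes of the \textbf{while} loop. Each pass resets $\mathrm{iso}=0$ and sets it back to $1$ only if at least one arc is cut, and every cut permanently removes at least one arc from the working graph; since the graph starts with $|E|$ arcs and none are ever added, the total number of cuts, and hence of productive passes, is $\Oc(|E|)$, plus at most one terminating unproductive pass. Within a pass the \textbf{for} loop ranges over $\Oc(|E|)$ arcs, the acyclicity test costing only $\Oc(|E|)$ by topological sorting and thus being dominated. In the unweighted digraph case the mincut comparison is redundant, so it suffices to build $I_e$ in $\Oc(|E|+|V|)$ by Theorem~\ref{prop:Iso}~$i)$, giving $\Oc(|E|^2+|V||E|)$ per pass and $\Oc(|E|^3+|V||E|^2)$ in total, which is $i)$. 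In the weighted case the full test "$I_e\neq\emptyset$ and \eqref{MC}" costs $\Oc(|E||V|)$ per arc by Theorem~\ref{prop:Iso}~$iv)$, giving $\Oc(|E|^2|V|)$ per pass and $\Oc(|V||E|^3)$ in total, which is $ii)$.

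The main obstacle is the inductive step of the correctness argument: the delicate part is the weight bookkeeping showing that exchanging $\nu^\ast\setminus\ee$ for the cut-compatible optimum $\rho$ of the residual graph leaves the total optimal weight unchanged, so that the freshly committed arcs $\vec F(e)$ genuinely lie inside a global minimum feedback arc set. Once this is secured, statement $iii)$ and the ``optimal subsolution'' claims are immediate, and the runtime bounds reduce to a routine multiplication of the $\Oc(|E|)$ pass count with the per-pass cost supplied by Theorem~\ref{prop:Iso}.
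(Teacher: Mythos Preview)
Your proposal is correct and follows essentially the same approach as the paper: the runtime bounds come from counting two nested $\Oc(|E|)$ loops with line~6 as the per-iteration bottleneck and invoking Theorem~\ref{prop:Iso}~$i)$,~$iv)$, and the correctness claims reduce to Theorem~\ref{prop:Iso}~$iii)$. The paper's own proof is a two-line reference to Theorem~\ref{prop:Iso}, so your inductive invariant argument (that the accumulated cuts always sit inside some global minimum feedback arc set) is simply the natural way to make that reference rigorous.
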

\begin{proof} As one can verify readily Algorithm \ref{alg:ISO} contains 2 recursion over $E$ with line 6 being the bottleneck for each recursion.
The runtime estimation thereby follow directly from Theorem \ref{prop:Iso} $i)$ and $iv)$. Statement $iii)$ follows from Theorem \ref{prop:Iso} $iii)$.
\end{proof}
Note that due to Remark \ref{rem:hierachy} the algorithm ISO--CUT removes all loops from $G$.

\subsection{A Good Guess}
Though isolated cycles allow to localize optimal cuts
they do not need to exist at all, as the Example \ref{ex:iso} shows. Thus, 
the algorithm ISO--CUT might not return an acyclic graph. In this case  we have to develop a concept of a \emph{good guess} for cutting $G$ in a pseudo--optimal way until it possesses isolated cycles and 
thereby ISO--CUT can proceed. Our idea is based on the following fact.

\begin{proposition} Let $G=(V,E,\head,\tail)$ be a multi--digraph with arc weight $\omega : E \lo \R^+$,  $e \in E$, with $G_e\not = \emptyset$ and $I_e = \emptyset$.
Denote with $\delta= \mathrm{mincut}(\head(e),\tail(e),G_e,\omega_{| G_e})$ a minimum--$s$--$t$--cut
and with $\ee \in \mathcal{F}_E(G,\omega)$ a minimum feedback arc set, while $\ee' = \ee \cap \Ec(G_e)$ shall denote its restriction to $G_e$.
If
\begin{equation}\label{ratio}
 \Omega_{G,\omega}(\delta) - \Omega_{G,\omega}(\vec F(e)))  >  \Omega(G\setminus \vec F(e),\omega) -\Omega(G\setminus \ee',\omega)
\end{equation}
then $\vec F(e) \subseteq  \ee$.
\end{proposition}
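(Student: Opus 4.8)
The plan is to argue by contradiction, exploiting the dichotomy already supplied by Theorem~\ref{prop:Iso}~$ii)$. Since every $f \in \vec F(e)$ satisfies $\vec F(f) = \vec F(e)$, item $ii)$ forces a minimum feedback arc set to either contain all of $\vec F(e)$ or be disjoint from it. Hence it suffices to assume $\vec F(e) \cap \ee = \emptyset$ and derive a contradiction to the hypothesised strict inequality \eqref{ratio}; the conclusion $\vec F(e) \subseteq \ee$ then follows automatically from the dichotomy.

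First I would decompose $\ee = \ee' \cup (\ee \setminus \ee')$ along $G_e$ and observe that, because $\ee$ is optimal, $\ee \setminus \ee'$ must be a minimum feedback arc set of $G \setminus \ee'$: otherwise a cheaper replacement $\ee''$ would make $\ee' \cup \ee''$ a strictly cheaper feedback arc set of $G$, the two pieces being disjoint so that the costs add. This yields $\Omega_{G,\omega}(\ee) = \Omega_{G,\omega}(\ee') + \Omega(G \setminus \ee', \omega)$. Symmetrically, cutting $\vec F(e)$ up front and solving $G \setminus \vec F(e)$ optimally produces a competing feedback arc set of cost $\Omega_{G,\omega}(\vec F(e)) + \Omega(G \setminus \vec F(e), \omega)$, where disjointness uses the assumption $\vec F(e) \cap \ee = \emptyset$. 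Minimality of $\ee$ against this competitor then gives
\[
 \Omega_{G,\omega}(\ee') - \Omega_{G,\omega}(\vec F(e)) \le \Omega(G \setminus \vec F(e), \omega) - \Omega(G \setminus \ee', \omega),
\]
i.e.\ the left-hand side is bounded by the right-hand side of \eqref{ratio}.

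The decisive step, and the one I expect to require the most care, is to show $\Omega_{G,\omega}(\ee') \ge \Omega_{G,\omega}(\delta)$, i.e.\ that the restriction $\ee'$ already realises at least the cost of the minimum $\head(e)$--$\tail(e)$ cut in $G_e$. For this I would verify that $\ee'$ separates $s = \head(e)$ from $t = \tail(e)$ inside $G_e$: if a directed path $P$ from $\head(e)$ to $\tail(e)$ survived in $G_e$ after removing $\ee'$, then prepending any arc $f \in \vec F(e) \cap \Ec(G_e)$ (which exists because $G_e \not= \emptyset$) closes a directed cycle whose arcs all lie in $\Ec(G_e)$ and avoid $\ee' = \ee \cap \Ec(G_e)$; since $f \notin \ee$ by assumption, this cycle would survive in $G \setminus \ee$, contradicting that $\ee$ is a feedback arc set. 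The subtle point is exactly that $P$ must be kept \emph{inside} $G_e$, so that its arcs cannot belong to $\ee \setminus \ee'$. Once the separation is established, minimality of $\delta$ over all $s$--$t$ separators of $G_e$ gives $\Omega_{G,\omega}(\ee') \ge \Omega_{G,\omega}(\delta)$. Chaining this with the inequality from the previous paragraph yields
\[
 \Omega_{G,\omega}(\delta) - \Omega_{G,\omega}(\vec F(e)) \le \Omega(G \setminus \vec F(e), \omega) - \Omega(G \setminus \ee', \omega),
\]
which directly contradicts \eqref{ratio}. Hence $\vec F(e) \cap \ee \not= \emptyset$, and Theorem~\ref{prop:Iso}~$ii)$ upgrades this to $\vec F(e) \subseteq \ee$, as claimed.
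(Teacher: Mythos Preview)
Your proof is correct and follows essentially the same line as the paper's: assume $e\notin\ee$, use Theorem~\ref{prop:Iso}~$ii)$ to get $\vec F(e)\cap\ee=\emptyset$, observe that $\ee'$ is an $s$--$t$ separator in $G_e$ so that $\Omega_{G,\omega}(\ee')\ge\Omega_{G,\omega}(\delta)$, and combine with $\Omega(G,\omega)=\Omega_{G,\omega}(\ee')+\Omega(G\setminus\ee',\omega)\le\Omega_{G,\omega}(\vec F(e))+\Omega(G\setminus\vec F(e),\omega)$ to contradict~\eqref{ratio}. Your exposition is more detailed than the paper's (which compresses the cut argument to the single remark that $G_e\setminus\ee'$ is acyclic); one minor quibble is that the phrase ``disjointness uses the assumption $\vec F(e)\cap\ee=\emptyset$'' is misplaced, since the optimal solution for $G\setminus\vec F(e)$ is automatically disjoint from $\vec F(e)$ regardless of $\ee$.
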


\begin{proof}  Assume $e \not \in \ee$ then due to Theorem \ref{prop:Iso} $ii)$ there holds  $\ee \cap \vec F(e) = \emptyset$. Since $G_e \setminus \ee'$ is acyclic and $\delta$ is a minimim $s$--$t$--cut we obtain 
$\Omega_{G,\omega}(\ee') \geq   \Omega_{G,\omega}(\delta)$. 
Hence, by rewriting \eqref{ratio} we can estimate  
$$ \Omega_{G,\omega}(\vec F(e))) +\Omega(G\setminus \vec F(e),\omega)  < \Omega_{G,\omega}(\delta) + \Omega(G\setminus \ee',\omega) \leq  \Omega_{G,\omega}(\ee') +  \Omega(G\setminus \ee',\omega) =  \Omega(G,\omega)\,,$$
which yields a contradiction and thereby proves the claim.
\end{proof}
Certainly, the right hand side of \eqref{ratio} is hard to compute or even to estimate. Intuitively, one could  guess that the larger the left hand side becomes the more likely it is that the inequality in \eqref{ratio} holds.
This intuition is the basic idea of our concept of a \emph{good guess}.

However, maximizing the left hand side of \eqref{ratio} is too costly for a heuristic guess. Therefore, we restrict our considerations to one cycle $c \in \Oe(G)$ and all arcs $e_1,\dots,e_n \in \Ec(c)$
with $\Gc(c) \subsetneq G_{e_i}$, $i=1,\dots,n$ cutting more cycles than $c$. Now we choose
\begin{equation}\label{Good}
 \text{GOOD--GUESS}(G,\omega,c)  = \mathrm{argmax}_{e_i,i=1,\dots,n}\Big(\mathrm{mincut}(\head(e_i),\tail(e_i)) - \Omega_{G,\omega}(\vec F(e_i)\Big)
\end{equation}
as an arc with the most expansive minimum $s$--$t$--cut to be the one to cut.  By combining \cite{KRT} and \cite{orlin} computing minimum $s$--$t$--cuts requires $\Oc(|E||V|)$. Consequently, this heuristical
decision can be made efficiently in $\Oc(|c||E||V|)$.

\begin{algorithm}[t!]
\caption{TIGHT--CUT}\label{alg:tight}
{\footnotesize
\begin{algorithmic}[1]
\Procedure{TIGHT--CUT}{$G,\omega)$}\Comment{$\omega$ is an arc weight.}
\State $\ee =\emptyset$, $\delta =\emptyset$
\For{ $i =1,\dots |E|$}
        \State $(G,\ee')=\text{ISO--CUT}(G,\omega)$
        \State $\ee = \ee \cup \ee'$
        \If{$G$ is acyclic}  \Comment{Can be checked by topological sorting in $\Oc(|E|)$}
        \State \textbf{break}
        \Else
        \State Choose $c \in Oe(G)$
        \State  $h=$GOOD--GUESS$(G,\omega,c)$
        \State  $\delta= \delta\cup\{h\}$
        \State  $G = G\setminus h$
        \EndIf
\EndFor
      \State \textbf{return} $(\ee \cup \delta)$, $\Omega_{G,\omega}(\delta)$
\EndProcedure
\end{algorithmic}
}
\end{algorithm}

\subsection{The Global Approach}
Now we combine the algorithms ISO--CUT and GOOD--GUESS to yield an algorithm termed TIGHT-CUT computing feedback arc sets formalized in Algorithm \ref{alg:tight}.

\begin{proposition}  Let $G=(V,E,\head,\tail)$ be a multi--digraph with arc weight $\omega : E \lo \R^+$ and vertex weight $\nu : V\lo \R^+$.
\begin{enumerate}
 \item[i)] The algorithm TIGHT--CUT proposes a feedback arc set $\ee\cup \delta \subseteq E$ in $\Oc(|V||E|^3 + |E|^4)$ in the unweighted case.
  \item[ii)] In the weighted case the analogous return requires  $\Oc(|V||E|^4)$ runtime.
 \item[iii)] The algorithm TIGHT--CUT can be adapted to proposes a feedback vertex  set $\nu\subseteq E$ in \linebreak $\Oc(|E|(\Delta(G)|V|)^3)$ in the unweighted case and
 $\Oc(|E|(\Delta(G)|V|)^4)$ in the weighted case, where $\Delta(G)$ denotes the maximum degree of $G$.
\end{enumerate}

\label{prop:half}
\end{proposition}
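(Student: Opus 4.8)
The plan is to read off the runtime of TIGHT--CUT directly from Algorithm~\ref{alg:tight} using the costs already established for its two subroutines, and then to transport the result to the FVSP through the reduction of Appendix~\ref{Dual}. For i) and ii) I would observe that the outer \textbf{for}-loop runs at most $|E|$ times by construction, each pass consisting of one call to ISO--CUT, one acyclicity test by topological sorting ($\Oc(|E|)$), the selection of a cycle $c$ ($\Oc(|E|+|V|)$), and one call to GOOD--GUESS. By the discussion following \eqref{Good} a single GOOD--GUESS costs $\Oc(|c|\,|E|\,|V|)$, and since $c$ is elementary, $|c|\le|E|$, so this contributes $\Oc(|E|^2|V|)$. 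By Lemma~\ref{lemma:Isocut} each ISO--CUT call costs $\Oc(|E|^3+|V||E|^2)$ unweighted and $\Oc(|V||E|^3)$ weighted; in both regimes this dominates both the GOOD--GUESS term and the $\Oc(|E|)$ bookkeeping. Multiplying the per-pass cost by the $\Oc(|E|)$ passes yields $\Oc(|E|^4+|V||E|^3)$ and $\Oc(|V||E|^4)$, which are the claims. I would remark that using the original $|E|$ for every pass ignores the shrinking of $G$, so the product is a safe worst-case upper bound.

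For iii) the plan is to invoke the approximation-preserving $L$-reduction from the FVSP to the FASP recalled in Appendix~\ref{Dual}: split every vertex $v$ into $v_{\IN},v_{\OUT}$ joined by a single \emph{vertex-arc} of weight $\psi(v)$, and reroute each original arc as an uncuttable \emph{edge-arc} between the appropriate $\OUT$/$\IN$ copies. This produces a multi--digraph $G'$ with $|\Vc(G')|=2|V|=\Oc(|V|)$ and $|\Ec(G')|=|E|+|V|$, and the estimate $|E|\le\Delta(G)|V|$ gives $|\Ec(G')|=\Oc(\Delta(G)|V|)$. A feedback arc set of $G'$ translates back to a feedback vertex set of $G$ of equal weight, the cut vertex-arcs marking the deleted vertices. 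The decisive point is that only the $|V|$ vertex-arcs are ever removed, so the number of cuts --- hence the outer-loop factor inherited from i)/ii) --- is $\Oc(|V|)\subseteq\Oc(|E|)$ rather than $\Oc(|\Ec(G')|)$. Substituting $|\Vc(G')|$ and $|\Ec(G')|$ into the per-call ISO--CUT costs of Lemma~\ref{lemma:Isocut} then yields $\Oc(|E|(\Delta(G)|V|)^3)$ and $\Oc(|E|(\Delta(G)|V|)^4)$ in the unweighted and weighted cases.

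The step I expect to be the main obstacle is keeping the two regimes apart in iii): the collapse of the loop factor to $\Oc(|V|)$ hinges on the $\Oc(\Delta(G)|V|)$ edge-arcs never entering a cut, since a naive substitution $|E|\mapsto|\Ec(G')|$ would instead produce the strictly larger $\Oc((\Delta(G)|V|)^4)$. Making the edge-arcs uncuttable is immediate by prohibitive weight in the weighted case, but in the unweighted multi--digraph case it must be realized by the parallel-arc bundling of Appendix~\ref{Dual} without inflating $|\Ec(G')|$ beyond $\Oc(\Delta(G)|V|)$, so that the unweighted cost of i) --- and not the costlier weighted cost --- still governs. The remaining bookkeeping in i) and ii) is routine once Lemma~\ref{lemma:Isocut} and the cost of GOOD--GUESS are in hand.
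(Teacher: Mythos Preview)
Your treatment of i) and ii) is exactly the paper's argument, only spelled out in more detail: one pass of the outer loop per arc, ISO--CUT dominating GOOD--GUESS by Lemma~\ref{lemma:Isocut}, and the product gives the stated bounds.

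For iii) your route diverges from the paper's. The paper does not use the classical vertex--splitting $v\mapsto(v_{\IN},v_{\OUT})$; it instead invokes its own dual--digraph construction $G^*$ from Definition~\ref{def:dualG} (built via the natural hyper--graph $\Hc(G)$), together with Proposition~\ref{Lredu}. In $G^*$ the original arcs become \emph{vertices} and each original vertex $v$ becomes the single finite--weight arc $f_v$, so $|V^*|=|E|+2|V|$ and $|E^*|=2|E|+|V|$, both $\Oc(\Delta(G)|V|)$. Your splitting graph has $|\Vc(G')|=\Oc(|V|)$ rather than $\Oc(\Delta(G)|V|)$, but since in both reductions the arc count is $\Oc(\Delta(G)|V|)$ and the vertex count is absorbed by the dominant $|E'|^3$ term in Lemma~\ref{lemma:Isocut}, the two approaches yield the same per--call cost. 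Your observation that only the $|V|$ vertex--arcs are ever cut --- hence the outer loop collapses to $\Oc(|V|)\subseteq\Oc(|E|)$ --- is more explicit than anything the paper writes down, and it is precisely what is needed to land on $\Oc(|E|(\Delta(G)|V|)^3)$ rather than $\Oc((\Delta(G)|V|)^4)$. One small inaccuracy: Appendix~\ref{Dual} does not actually ``recall'' the $v_{\IN}/v_{\OUT}$ splitting you describe; what it contains is the $G^*$ construction, which plays the same role but is structurally different. The concern you flag about enforcing uncuttability of edge--arcs in a genuinely unweighted setting is real and is not addressed in the paper either; the paper's $\omega^*$ simply assigns them weight~$\infty$.
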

\begin{proof} Obviously TIGHT--CUT runs once through all arcs $E$ in the worst case. Due to the  notion of our GOOD--GUESS \eqref{Good} the bottleneck is thereby ISO--CUT. Thus, due to Lemma \ref{lemma:Isocut} we obtain $i)$, $ii)$.
Now $iii)$ is a consequence of an existing approximation preserving $L$--reduction from the FVSP to the FASP relying on Definition \ref{def:dualG} and Proposition \ref{Lredu}.
\end{proof}

\section{Validation \& Benchmarking}

To speed up the heuristic TIGHT--CUT we formulated a relaxed version, which we implemented in C++. The relaxation relies on
weakening condition $iii)$ in Theorem \ref{prop:Iso} by a notion of \emph{almost isolated cycles}. Further explanations and a pseudo--code are given in Appendix~\ref{relax} and Algorithm~\ref{alg:tightR}.
All benchmarks were run on a single CPU core on a machine with
CPUs: $2\times$ Intel(R) Xeon(R) E5-2660 v3 @ 2.60GHz;
Memory: 128GB;
OS: Ubuntu 16.04.6 LTS using compiler: GCC 9.2.1.%

The C++ code, as well as all benchmark datasets used in this work, are publicly available at \url{https://git.mpi-cbg.de/mosaic/FaspHeuristic}.

\begin{figure}[tbp]
 \centering
 \vspace{-0.25cm}
\includegraphics{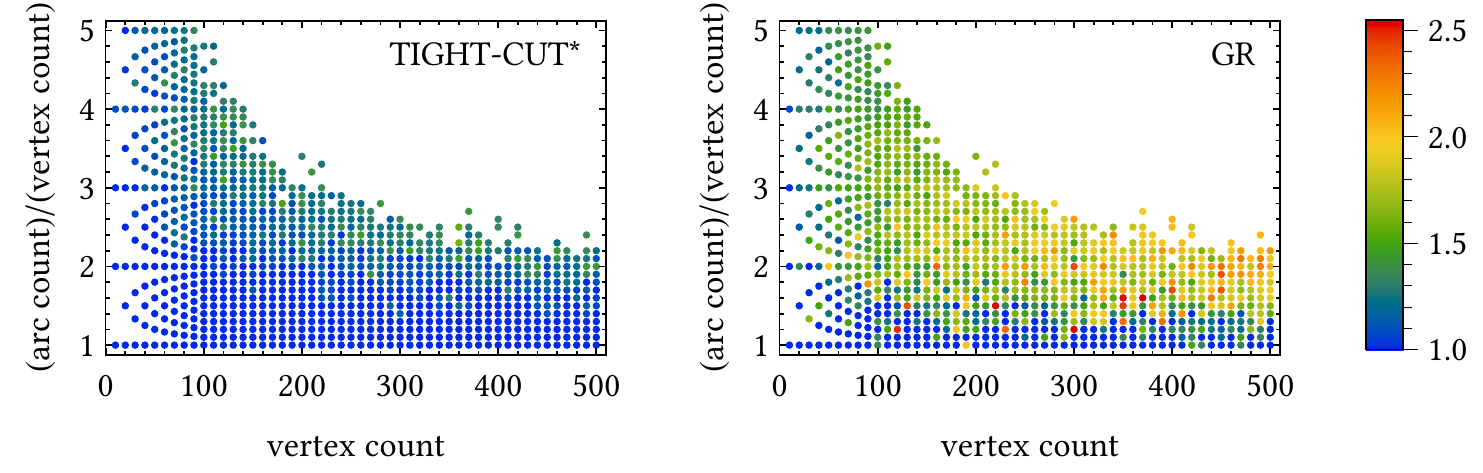}\\
\vspace{0.2cm}
\includegraphics{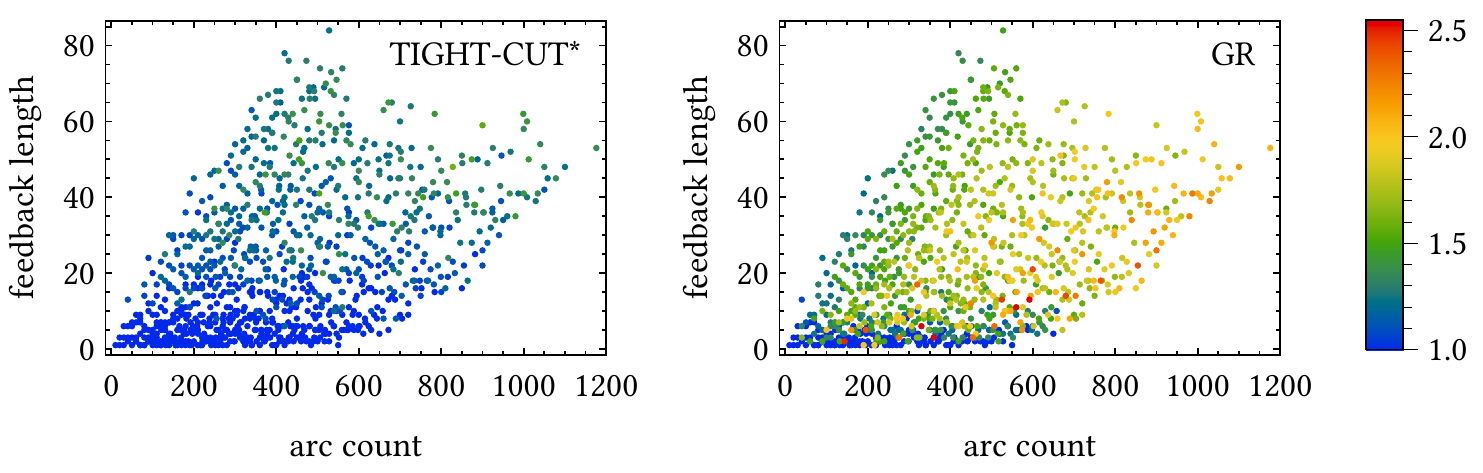}
  \vspace{-0.8cm}
 \caption{Approximation ratios of TIGHT--CUT* (left) and GR (right) for random graphs.}
\label{fig:ATG}
\end{figure}

The following implementations were used for the experiments:
\begin{enumerate}
 \item[I)] An exact integer linear programming based approach implemented as the \texttt{feedback\_edge\_set} function from \emph{SageMath 8.9} \cite{sagemath} with iterative constraint generation, termed  \emph{EM}.
 \item[II)] The greedy removal approach from  \cite{Eades1993}, termed  \emph{GR}, imported from the \emph{igraph library} \cite{IGR,IGraphM}.
 \item[III)] The relaxed version of TIGHT--CUT, termed \emph{TIGHT--CUT*}, presented in Appendix \ref{relax} with settings $n=3$, $N=20$.
\end{enumerate}

EM is similar to the approach from \cite{exact} and iteratively increases the \emph{cycle matrix} required for the optimization. Thereby, a sequence  $\ee_1,\dots,\ee_n$, $n \in \N$
of optimal subsolutions is generated with $\ee_n \in \mathcal{F}_E(G,\omega)$, $\omega\equiv 1$ being a global solution for $G$. Indeed, the method can not handle the weighted case.
We chose the GLPK back end for SageMath's integer programming solver, which we found to perform significantly better than COIN-OR's CBC or Gurobi.

\begin{figure}[t]
 \centering
\includegraphics{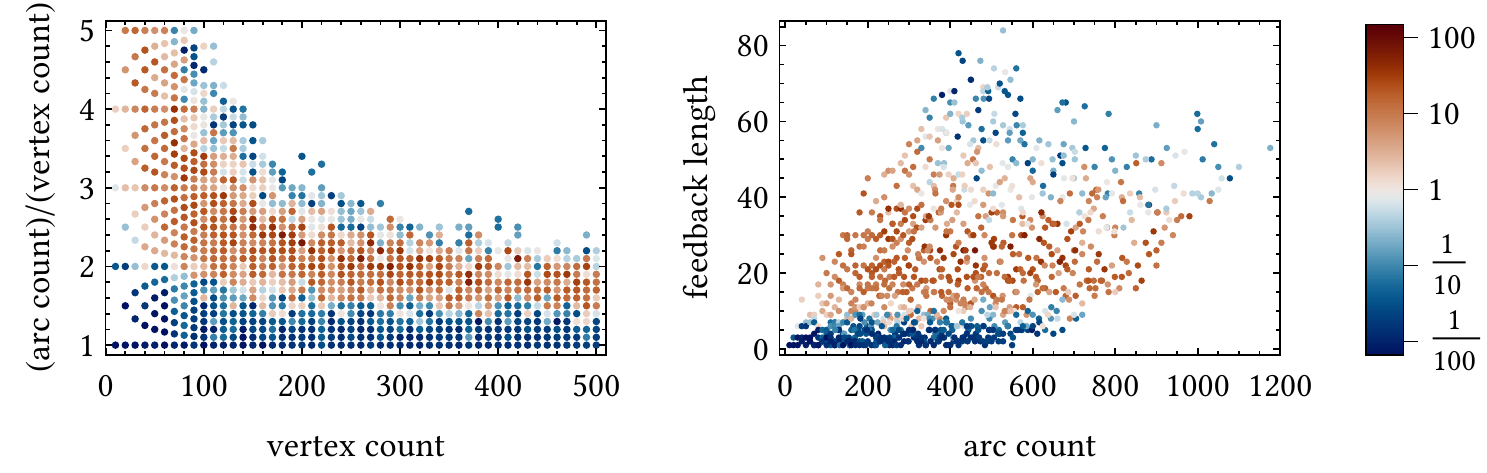}
  \vspace{-0.5cm}
 \caption{Runtime ratios of TIGHT--CUT* / EM plotted against $|E|/|V|$ (left) and feedback length (right).}
 \label{fig:TI}
\end{figure}

\subsection{Synthetic Instances}
In order to compare approximation ratios and runtimes we generated the following instance classes:
\begin{enumerate}
 \item[i)] We used the  Erd{\H{o}}s--R{\'e}nyi model in order to generate random digraphs  \cite{model}.
 \item[ii)] We uniform randomly chose a direction for every edge in a complete undirected graph $K_n$ in order to generate tournaments of size $n \in \N$.
 \item[iii)] We generated random maximal planar digraphs, then considered uniform perturbations of planarity by randomly \emph{rewiring}, i.e.~removing and re-inserting, a fraction $0 \leq p\leq 1$ of arcs.
 This construction is similar to the Watts--Strogatz \emph{small-world model} \cite{Watts1998}.
\item[iv)] We followed \cite{Saab} in order to generate large digraphs $G$ of known feedback arc length. Adaptions to treat the weighted case were made.
\end{enumerate}

\begin{experiment} In total we generated  1869 random digraphs. Figure~\ref{fig:ATG} shows the approximation ratios obtained by  TIGHT--CUT* and GR on 967 out of these 1869 graphs plotted
once against $|E|/|V|$ and once against the exact minimum feedback arc length.
The exact feedback length was determined by EM whose runtime ratio w.r.t. TIGHT--CUT* is plotted in Figure~\ref{fig:TI}. Thereby, the empty region in the left panel reflects the 902 instances
which EM could not process within EM--time--out $=30$ min.

Figure~\ref{fig:ATG} validates that TIGHT--CUT* approximates the FASP beneath a ratio of at most $1.6$ by being much tighter in most of the cases. On the other hand, GR reaches ratios up to $2.5$.
The parameter tractable algorithm of \cite{Chen} indicates that the feedback length reflects the complexity of a given instance. However, the accuracy of GR decreases quickly
with increasing graph size and $|E|/|V|$ regardless of the feedback length.
In contrast, the accuracy behavior of TIGHT--CUT* reflects that circumstance. Whatsoever, TIGHT--CUT* performs significantly better than GR. Especially, when approaching the time--out--region of EM,
the approximation ratios of TIGHT--CUT* remain small. Thus, for digraphs located above the red region in Figure~\ref{fig:TI} the plot in  Figure~\ref{fig:weight} shows that TIGHT--CUT* is up to $100$--times faster than EM.
Even though TIGHT--CUT*  requires up to $3$ min and GR runs beneath $1$ sec, on these graphs,
TIGHT--CUT*  is the only approach producing reasonable results.
 \end{experiment}

\begin{figure}[t]
\centering
\includegraphics{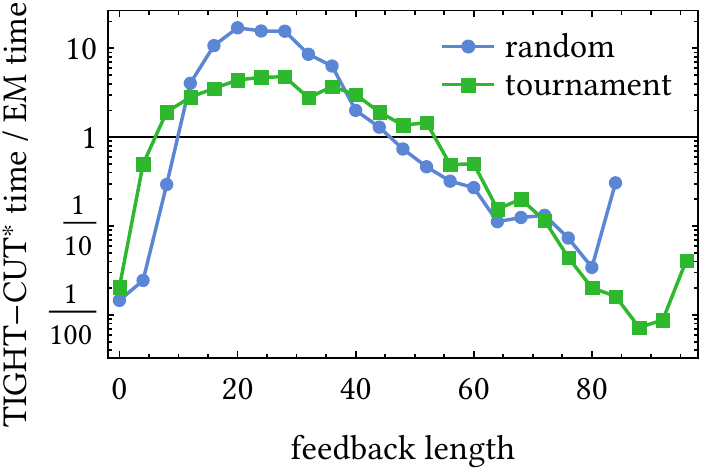}
\hspace{0.2cm}
\includegraphics{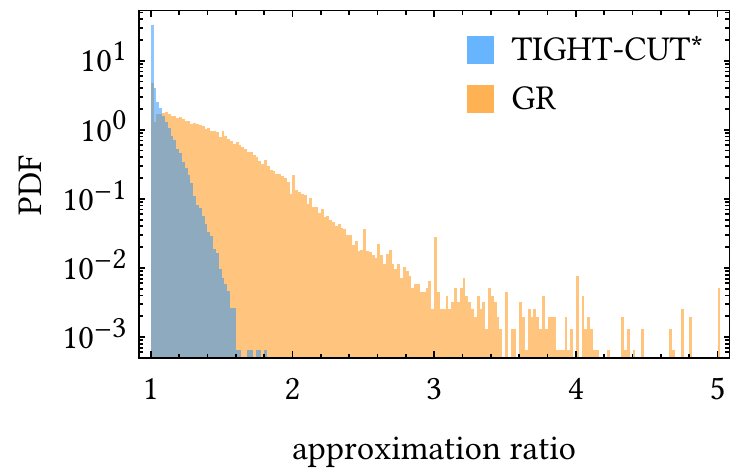}
  \vspace{-0.6cm}
 \caption{Runtime ratios of TIGHT--CUT* / EM (left) and distribution of TIGHT--CUT* and GR on weighted digraphs with logarithmic $y$--scale (right).}
\label{fig:weight}
\end{figure}
\begin{figure}[t]
 \centering
 \includegraphics{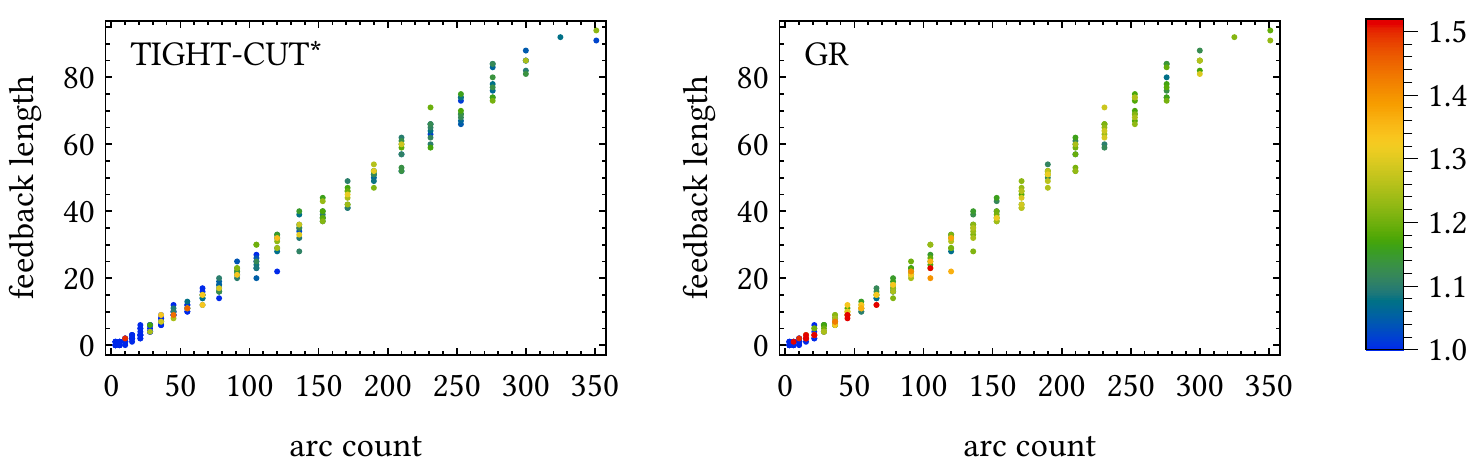}
 \includegraphics{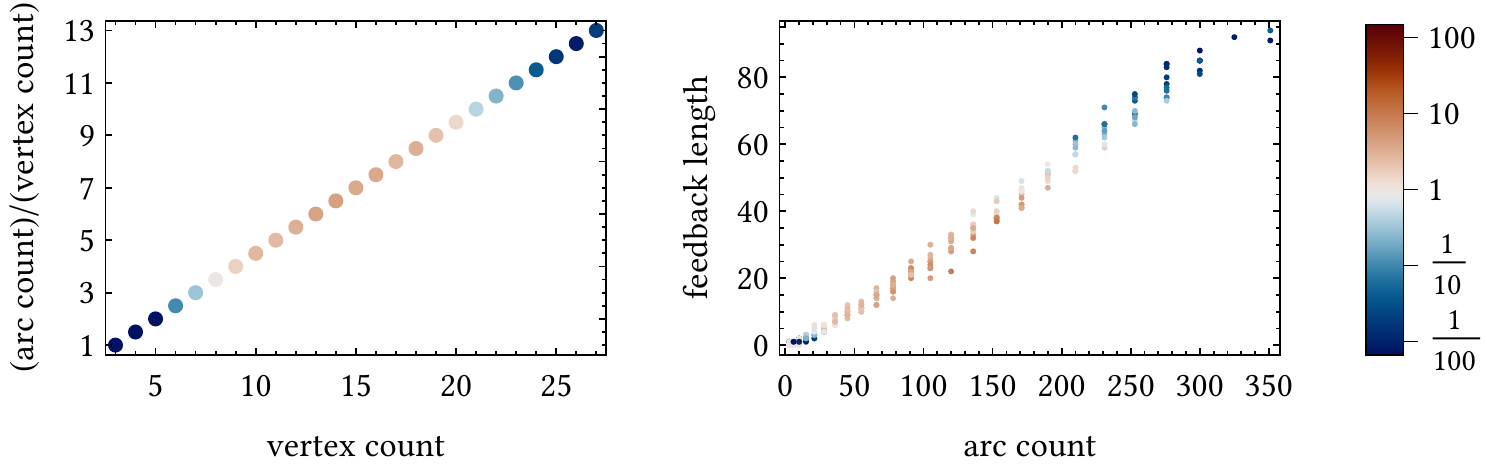}
  \vspace{-0.25cm}
 \caption{Approximation ratios (above) of TIGHT--CUT* (left) and GR (right)  and runtime ratios (below) of TIGHT--CUT*/EM  vs vertex count (left) and arc count  (right) on tournaments.}
\label{fig:tour}
\end{figure}

\begin{figure}[t]
 \centering
\includegraphics{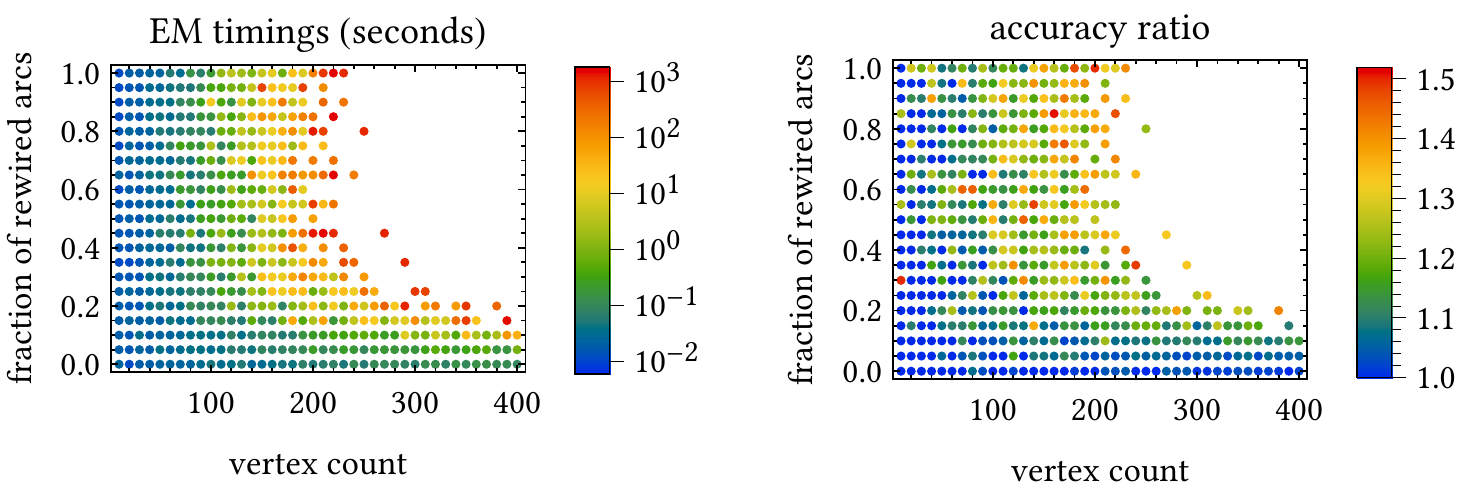}
  \vspace{-0.25cm}
 \caption{Runtimes of EM for perturbed planar graphs (left) and approximation ratio of TIGHT--CUT* (right).}
 \label{fig:world}
\end{figure}

\begin{experiment} We focus our considerations to tournaments. In Figure~\ref{fig:tour} the results for $|V|=1,\dots,27$ with 10 instances for each size are shown. $|V|=27$ is thereby the maximum size for EM not running into
time--out $=30$ min.
GR seems to perform only slightly worse than TIGHT--CUT*.
However, the feedback arc length for tournaments
averages about $25\%$ of its high arc count $|E| = (|V|^2 + |V|)/2$. Thus, the improvement in accuracy TIGHT--CUT* gains compared to GR is as significant. Again the NP-hardness of the FASP
becomes visible for the runtime ratios in Figure~\ref{fig:tour} and Figure~\ref{fig:weight} (left). As expected the feedback arc length is correlated to the complexity of the cycle structure of $G$.
Regardless of the type of the graphs we thereby reach intractable instances for EM beyond a feedback arc length of $\Omega(G,\omega)\geq 50$. Thus, for instances allowing $\Omega(G,\omega)\geq 50$,
EM might run into time out while TIGHT--CUT* processes them efficiently.
\end{experiment}

\begin{experiment}
  Since EM can not handle the weighted FASP we adapted the method of \cite{Saab} to generate 77700 weighted multi--digraphs $(G,\omega)$ of integer weights $\omega(e) = 1 \sim 10$ with known feedback arc length
  $\Omega(G,\omega) = 1\sim287$
  and sizes from $|V|=100\sim 500$ and $|E|/|V| =1.5\sim5$.
  Figure~\ref{fig:weight} (right) illustrates the results. To merge the ratio distributions of GR and TIGHT--CUT* on one plot we chose a logarithmic scaling for the $y$--axis.
  Indeed, TIGHT-CUT* approximates the FASP beneath a ratio of $2$ and solves more than $50\%$ exactly and $95\%$ beneath a ratio of $1.18$. In contrast GR is spread over ratios from $1$ to $5$
  producing exact solutions only for $8.8\%$ and $95\%$ beneath a ratio of $1.96$. Thus, though GR runs beneath $1$ sec and the runtimes of TIGHT--CUT* vary from seconds to $3$ minutes,
  this accuracy improvement justifies the larger amount of time.
\end{experiment}

\begin{experiment} In Figure~\ref{fig:world} the EM runtimes and  TIGHT--CUT* approximation ratios for 541 small--world (perturbed planar digraphs) are plotted. As one can observe already for small perturbations a similar behavior as for random graphs occurs.
In applications one can rarely guarantee planarity. At best, one can hope for planar--like instances. Consequently, real-world
instances, hinder the efficiency of ILP--Solvers on planar graphs to come into effect. Therefore, TIGHT--CUT* is an alternative to EM worth considering even for planar--like graphs.
\end{experiment}

\begin{figure}[t]
\centering
\includegraphics{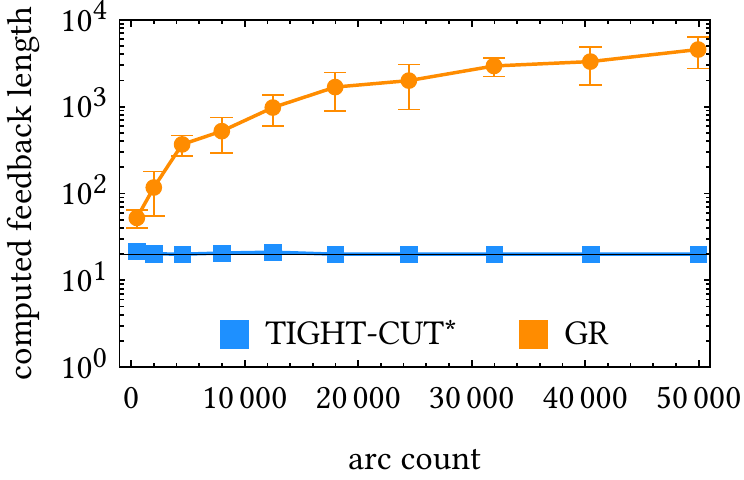}
 \vspace{-0.25cm}
 \caption{Accuracy for  TIGHT--CUT* and GR on very large digraphs of density $|E| / \bigl(|V|(|V|-1)\bigr) = 5\%$. The minimum feedback arc length of each instance is 20.}
\label{fig:large}
\end{figure}%

\begin{experiment}We measured the accuracy behavior of GR and TIGHT--CUT* in the time--out region of EM. Therefore, we generated very large unweighted digraphs with known feedback arc length
$\Omega(G,\omega) =20$ and varying vertex size $|V|=100,200,\dots,1000$, $10$ instances for each size, with  $5\%$ density, i.e., $|E|/\big(|V|(|V|-1)\big) = 5\%$.
In Figure~\ref{fig:large} the computed feedback lengths of both approaches are plotted with error bars indicating the standard deviation.
While TIGHT-CUT* delivers almost exact solutions GR is infeasible for these large graphs.
\end{experiment}

\begin{table}[t!]
\begin{tabular}{rrrrr}
 vertices & arcs & $\Omega(G,\omega)$ &  TIGHT--CUT* & approx. ratio \\
 \hline
 100 & 990 & 200 & 321 & 1.61 \\
 100 & 990 & 200 & 279 & 1.40 \\
 200 & 3980 & 200 & 280 & 1.40 \\
 500 & 1500 & 200 & 334 & 1.67 \\
 501 & 1501 & 200 & 345 & 1.73 \\
\end{tabular}
\caption{Examples of high feedback arc length for validating the ratio approximation.}
\label{TE}
\end{table}

\begin{table}[t!]
\begin{tabular}{r|rrrrr}
circuit name & vertices & arcs & $\Omega(G,\omega)$ & TIGHT--CUT* & GR \\
\hline
s27 & 55 & 87 & 2 & 2 & 2 \\
s208 & 83 & 119 & 5 & 5 & 5 \\
s420 & 104 & 178 & 1 & 1 & 1 \\
mm4a & 170 & 454 & 8 & 8 & 16 \\
s382 & 273 & 438 & 15 & 15 & 29 \\
s344 & 274 & 388 & 15 & 15 & 23 \\
s349 & 278 & 395 & 15 & 15 & 24 \\
s400 & 287 & 462 & 15 & 15 & 28 \\
s526n & 292 & 560 & 21 & 21 & 29 \\
mult16a & 293 & 582 & 16 & 16 & 23 \\
s444 & 315 & 503 & 15 & 15 & 20 \\
s526 & 318 & 576 & 21 & 21 & 31 \\
mult16b & 333 & 545 & 15 & 15 & 22 \\
s641 & 477 & 612 & 11 & 11 & 16 \\
s713 & 515 & 688 & 11 & 11 & 16 \\
mult32a & 565 & 1142 & 32 & 32 & 45 \\
mm9a & 631 & 1182 & 27 & 27 & 29 \\
s838 & 665 & 941 & 32 & 32 & 37 \\
s953 & 730 & 1090 & 6 & 6 & 11 \\
mm9b & 777 & 1452 & 26 & 27 & 31 \\
s1423 & 916 & 1448 & 71 & 71 & 112 \\
sbc & 1147 & 1791 & 17 & 17 & 21 \\
ecc & 1618 & 2843 & 115 & 115 & 137 \\
phase decoder & 1671 & 3379 & 55 & 55 & 64 \\
daio receiver & 1942 & 3749 & 83 & 83 & 123 \\
mm30a & 2059 & 3912 & 60 & 60 & 62 \\
parker1986 & 2795 & 5021 & 178 & 178 & 313 \\
s5378 & 3076 & 4589 & 30 & 30 & 75 \\
s9234 & 3083 & 4298 & 90 & 91 & 163 \\
bigkey & 3661 & 12206 & 224 & 224 & 224 \\
dsip & 4079 & 6602 & --- & 153 & 165 \\
s38584 & 20349 & 34562 & 1080 & 1080 & 1601 \\
s38417 & 24255 & 34876 & 1022 & 1022 & 1638 \\
\hline
ibm01 & 12752 & 36048 & --- & 1761 & 3254 \\
ibm02 & 19601 & 57753 & --- & 3820 & 5726 \\
ibm05 & 29347 & 98793 & 4769 & 4769 & 5979
\end{tabular}
\caption{Feedback arc set size results for graphs generated from the ISCAS and ISPD98 circuit benchmarks. TIGHT--CUT* is exact except in ``mm9b'' and ``s9234'', with failure $1$. For ``dsip'', no exact solution is available. Results for 
``ibm01'',``ibm02'', ``ibm05'' are given in the last 3 lines. }
\label{tab:iscas}
\end{table}

\begin{experiment} We generated a few very large and dense unweighted graphs with large feedback length $\Omega(G,\omega)=200$.
The results are listed in Table \ref{TE} and validate that TIGHT--CUT* approximates the FASP beneath a ratio of $2$.
\end{experiment}

\pagebreak 
\subsection{Real-world datasets}

Feedback problems find applications in \emph{circuit testing}, as efficient testing requires the elimination of feedback cycles \cite{chip,gupta,VLSI,kunzmann,leiserson,global,unger}.
Here we consider graphs generated from the ISCAS circuit testing datasets, made available in \cite{Dasdan2004} and at \url{https://github.com/alidasdan/graph-benchmarks}.
The results are summarized in Table~\ref{tab:iscas}. All examples were solved in runtime comparable to that of EM, except for ``dsip'', which could not be solved by EM within a computation time of 1 day,
and was solved by TIGHT--CUT* in 10 minutes. The runtime of the other examples ranged from milliseconds to minutes.

We also considered circuits from the ISPD98 benchmark \cite{Alpert1998,Dasdan2004}. These graphs are much larger, with arc counts ranging from $36\,000$ to $670\,000$.
The exact solution could only be obtained for one graph (``ibm05'') by EM with a runtime limit of 1 day. Thereby,  EM took 30 minutes and TIGHT--CUT* obtained an exact solution for ``ibm05'' in 2 minutes. 
Without limiting time--out, TIGHT--CUT* could process ``ibm01'', ``ibm02'' in 6 and 13 days, respectively.
The resulting feedback sizes are $1.25 \sim 1.85$ times smaller than the solutions proposed by GR, see again Table~\ref{tab:iscas}. 

The accuracy improvement gained by TIGHT--CUT* makes circuit testing much more efficient and robust for these graphs. Therefore, we aim  to speed up our implementation such that runtimes under 1 day can be achieved for the
ISPD98 instances. How these aims may become achievable and other remaining issues can be resolved is discussed in the final section.

\section{Conclusion}

We presented a new $\Oc(|V||E|^4)$--heuristic termed TIGHT--CUT of the FASP which is adaptable for the FVSP in $\Oc(|E|(\Delta(G)|V|)^4)$ processing even weighted versions of the problems.
At least by validation the ratio of the implemented relaxation TIGHT--CUT* is shown to be bounded by $2$ (in the unweighted case) and is much smaller for most of the considered instances.
Though we followed several ideas we can not deliver a proof of the APX--completeness for the directed FVSP \& FASP
at this time.
Nevertheless, we are optimistic that a deeper understanding of isolated cycles may provide a path for proving the ratio $r$ to be bounded by $2$ for all possible instances.
In any case, by Proposition \ref{Lredu} the directed FVSP \& FASP can be $L$-reduced to each other. Hence, either both problems are APX--complete or none of them.

Regardless of these theoretical questions, validation and benchmarking with the heuristic GR \cite{Eades1993} and the ILP--method EM from \cite{sagemath} demonstrated the high--quality performance of TIGHT--CUT*
even in the weighted case. Though of runtime complexity $\Oc(|V||E|^4)$, real-world instances, such as the graphs from the ISCAS circuit benchmarks, which are of large and sparse nature, can be solved efficiently within tight error bounds.

While our current implementation of TIGHT--CUT* can only solve a few of the instances from the ISPD98 circuit dataset, it provides a great accuracy improvement over GR. In light of this fact, we consider it worthwhile to spend further resources on improving the implementation. Runtime improvements of TIGHT--CUT* are certainly possible by parallelization, 
and by using more efficient implementations of subroutines, e.g.\ using a dynamic decremental computation of strongly-connected components \cite{acki2013}, and  using the improved minimum--$s$--$t$--cut algorithms 
from \cite{Goldberg2014}.
A fast implementation of the $L$--reduction from the FVSP to the FASP is in progress allowing to solve the FVSP by TIGHT--CUT* with the same accuracy in similar time.

We hope that many of the applications, even those which are not mentioned here, will benefit from our approach.

\begin{acks}
We thank Ivo F.\ Sbalzarini and Christian L.\ M\"uller for inspiring discussions and suggestions.
\end{acks}

\appendix
\section{Relaxed Version of TIGHT--CUT}
\label{relax}

In order to make the algorithm ISO--CUT faster and more effective we propose the following relaxation within TIGHT--CUT.

\begin{algorithm}[t!]
\caption{TIGHT--CUT*}\label{alg:tightR}
{\footnotesize
\begin{algorithmic}[1]
\Procedure{TIGHT--CUT*}{$G,\omega,n,N$}\Comment{$\omega$ is an arc weight, $n,N \in \N$.}
\State $\ee =\emptyset$, $\delta =\emptyset$
\For{ $i =1,\dots |E|$}
        \State $(G,\ee')=\text{ISO--CUT}(G,\omega)$
        \State $\ee = \ee \cup \ee'$
        \If{$G$ is acyclic}  \Comment{Can be checked by topological sorting in $\Oc(|E|)$}
        \State \textbf{break}
        \Else
        \State Choose $\mu_1,\dots,\mu_{N} \subseteq E$ with $|\mu_i|=n$ uniformly randomly.
        \State $(H_i,\varrho_i)=\text{ISO--CUT}(G\setminus \mu_i)$
        \State $R=\cup_{i=1}^{N}\varrho_{i,1}$ \Comment{$\varrho_{i,1}$ is the first arc cut by ISO--CUT.}
        \If{$R \not = \emptyset$}
        \State $f = \mathrm{argmax}_{e \in R}\big|\{  e=\varrho_{i,1}\}_{1\leq i\leq N}\big|$  \Comment{$f$ is a good choice in most of the $H_i$.}
        \State $\ee = \ee \cup\{f\}$
        \State $G = G \setminus f$
        \Else
        \State  Choose $c \in \Oe(G)$
        \State  $h=$GOOD--GUESS$(G,\omega,K)$
        \State  $\delta= \delta\cup\{h\}$
        \EndIf
        \State  $G = G\setminus h$
        \EndIf
\EndFor
      \State \textbf{return} $(\ee \cup \delta)$, $\Omega_{G,\omega}(\delta)$
\EndProcedure
\end{algorithmic}
}
\end{algorithm}

\begin{definition}[almost isolated cycles]
Let $G=(V,E, \head,\tail)$ be a multi--digraph $n \in \N$ and $e \in E$. If there  is a set $\mu \subseteq E$ of $n$ arcs, i.e., $|\mu|=n$ such that
$$ I_e\not = \emptyset \quad \text{w.r.t.} \quad G \setminus \mu$$
then we call the cycles $c \in \Oe(I_e)$ \emph{almost isolated cycles}. If $n=0$ then we obtain the notion of Definition \ref{def:iso}.
\end{definition}
As long as there are isolated cycles for small $n \in \N$ one can hope that the accuracy of TIGHT--CUT remains high.
We take this relaxed notion into account as follows.
If no isolated cycles were found then we generate $N$ graphs $H_i=G\setminus \mu_i$ by randomly deleting arcs $\mu_i \subseteq E$, $|\mu_i|=n$, $i=1,\dots,N$, $n,N \in \N$
and ask for the existence of  almost isolated cycles, i.e., search for  arcs $f$ in $H_i$ with $I_f\not = \emptyset$.
The arc appearing most in all the explored graphs $H_i$ is assumed to be a good choice for cutting it in the original graph.
If no such arc can be found then we use GOOD--GUESS for making a choice in any case.
The relaxation is formalized in Algorithm \ref{alg:tightR}.

\section{The Dualism of the  FVSP  \& FASP}
\label{Dual}
Though the dualism of the FVSP and the FASP is a known fact, its treatment is spread over the following publications \cite{ausiello,crescenzi1995compendium,Even:1998,FASP,kann1992}.
Here, we summarize and simplify the known results into one compact presentation allowing also to consider weighted versions. We recommend \cite{kann1992} for a modern introduction into approximation theory.

In our previous work \cite{FASP} 
we missed the crucial difference between the directed and undirected FVSP and therefore misleadingly used the dualism to 
claim the APX--completeness of the FASP. In addition to its new contributions, we want to take this section as a chance to correct our misunderstanding.

The following additional notions and definitions are required.

For a given vertex $v \in V$,
$\cev  N_E(v):=\li\{ e \in E \mi  \head(e)  = v\re\}$, $\vec N_E(v):=\li\{ e \in E \mi  \tail(e)  = v\re\}$, $N_E(v) = \cev{N}_E(v) \cup \vec N_E(v)$
shall denote the set of all incoming or outgoing arcs of $v$, and their unions.
The \emph{indegree} (respectively \emph{outdegree}) of a vertex $v$ is given by $ \cev \deg(v) = |\cev N_E(v)|$, $ \vec \deg(v) = |\vec N_E(v)|$ and the degree of a vertex is
$\deg(v)=|N_E(v)|$. The maximum degree of a graph is denoted by $\Delta(G):=\max_{v \in V} \deg(v)$.

\begin{definition}[directed line graph] The \emph{directed line graph} $\LL(G)=(V_L,E_L,\head_L,\tail_L)$ of a multi--digraph $G$ is a digraph where each vertex
represents one of the arcs of $G$, i.e., $V_L:= E$. Two vertices are connected by an arc
if and only if the corresponding arcs are consecutive, i.e., $  E_L :=  \li\{(e,f) \in E\times E \mi e,f \,\, \text{are consecutive} \re\}$, with
\begin{align*}
  \head_L,\tail_L :   E_L \lo V_L\,, \quad   \head_L\big((e,f)\big) := f\,, \,\, \tail_L\big((e,f)\big) := e\,.
\end{align*}
If there is an arc weight $\omega : E \lo \R^+$ on $G$ then we consider the induced vertex weight $\psi_L : V_L\lo \R^+$ given by $\psi(h) = \omega(h)$, for all $h \in V_L=E$.

\end{definition}

\begin{remark}
 Note that the line graph $\LL(G)$ has no multiple arcs and can be constructed in $\Oc(\Delta(G)|E|)$.
\end{remark}

The dual concept is to derive the \emph{natural hyper--graph} $\Hc(G)$  of a multi--digraph $G$ such that $G$ becomes the line graph of $\Hc(G)$, i.e., $\LL(\Hc(G))=G$. More precisely:

\begin{definition}[natural hyper--graph] Let $G=(V,E,\head,\tail)$ be a multi--digraph. We set
$V_H = E$ and introduce hyper--arcs
$E_H = \{e_v \mi v \in V\}$ with $\head_H(e_v)= \vec N_E(v)$, $\tail_H(e_v)= \cev N_E(v)$. The \emph{natural hyper--graph} is then given by $\Hc(G)=(V_H,E_H,\head_H, \tail_H)$.
See Figure~\ref{fig:hyp}  for an example. If there is a vertex weight $\psi : V \lo \R^+$ on $G$ then we consider the hyper--arc weight $\omega_H : E_H \lo \R^+$ given by
$\omega_H(e_v) = \psi(v)$ for all $e_v \in E_H$.
\end{definition}

\begin{remark}
 Note that for any multi-digraph $G$ the natural hyper--graph $\Hc(G)$ contains no multiple hyper--arcs and can be constructed in $\Oc(\Delta(G)|V|)$. Further, $\head(e_v)=\emptyset$, $\tail(e_v)=\emptyset$ is allowed.
 While  a loop $e \in E$ with $\Vc(e)=v$ in $G$  is represented in $\Hc(G)$ by the property $e \in \vec N_E(v),\cev N_E(v)$.
\end{remark}

\begin{definition}[dual digraph] Let $G=(V,E,\head,\tail)$ be a multi--digraph and $\Hc(G)=(V_H,E_H,\head_H,\tail_H)$ its natural hyper--graph.
For every hyper--arc $e_v \in E_H$ we consider the bipartite graph  $G_{e_v}=(V_{e_v},E_{e_v},\head_{e_v},\tail_{e_v})$ given by
$$V_{e_v} = \head_H(e_v) \cup \tail_H(e_v)\cup \{u,w\}\,, \quad  E_{e_v} =\{f_{v}\}\cup \big(\{u\}\times \tail_H(e_v)\big) \cup \big(\{v\}\times \head_H(e_v)\big)\,.$$
Further, we set $\head_{e_v}(f_{v}):=w\,, \tail_E(f_{e_v}):=u$ and
$$ \tail_{e_v}\big((x,y)\big) =x \,, \,\, \head_{e_v} \big((x,y)\big) =y \quad \forall \,\,(x,y) \in  \big(\{u\}\times \tail_H(e_v)\big) \cup \big(\{w\}\times \head_H(e_v)\big)\,.  $$
Finally, we consider the sets
$$V^*= \bigcup_{e_v \in E_H} V_{e_v} \,, \quad E^*= \bigcup_{e_v \in E_H} E_{e_v} \,,$$
define the maps $\head^*,\tail^*$ as the continuation of $\head_{e_v},\tail_{e_v}$ onto $E^*$ and denote the  dual multi--digraph of $G$ by $G^*=(V^*,E*,\head^*,\tail^*)$.
If there is a vertex weight $\psi: V \lo \R^+$ on $G$ then we consider the arc weight $\omega^* : E^* \lo \R^+\cup\{\infty\}$ given by
$$\omega^*(f)= \li \{\begin{array}{cl}
                \psi(v) &,  \text{if} \,\, f=f_{v} \,\, \text{for some}\,\,\, v \in V\\
                \infty &, \text{else}
               \end{array}\re. \,.
$$
Figure~\ref{fig:hyp} illustrates an example. The thin arcs of $G^*$ are weighted with $\infty$ and the non--filled vertices correspond to the artificially introduced vertices $\{u,w\}$.
\label{def:dualG}
\end{definition}
\begin{remark}\label{dual}
 Again we observe that the dual digraph possesses no multiple arcs  and can be constructed from $G$ in $\Oc(\Delta(G)|V|)$.
\end{remark}
Combining the definitions above we obtain maps
\begin{equation} \label{maps}
 \tau: V \lo E^* \,, \,\,\,\tau(v) = f_v \quad \text{and} \quad    \varrho : E \lo V_L\,, \quad \varrho(e) = e\,.
\end{equation}
\begin{figure}[t!]
 \centering
\includegraphics[width=15cm]{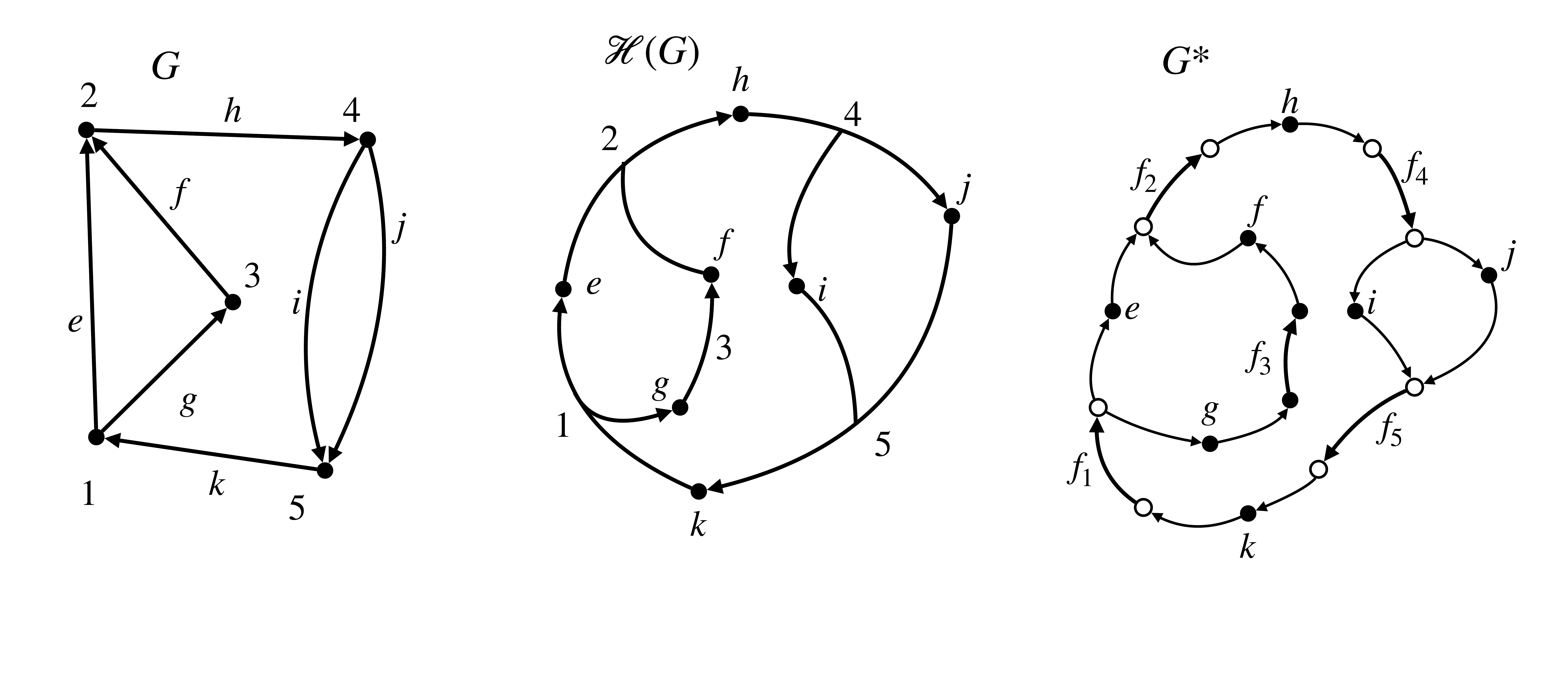}
\vspace{-1.5cm}
 \caption{The natural hyper--graph $\Hc(G)$ and the dual digraph $G^*$ of $G$.}
 \label{fig:hyp}
\end{figure}
Indeed $\tau$ and $\varrho$ allow to show that weighted FVSP and the the weighted FASP are approximation preservable reducible to each other.

\begin{proposition} \label{Lredu} Let $(G,\omega,\psi)$ be a weighted multi--digraph and $\ee \in \mathcal{F}_E(G,\omega)$ be a minimum feedback arc set and $\nu \in \mathcal{F}_V(G,\psi)$ be a minimum feedback vertex set of $G$.
\begin{enumerate}
 \item[i)] $\tau(\nu)$ is a minimum feedback arc set of  $(G^*,\omega^*)$ with $\Psi_{G^*,\omega^*}(\tau(\nu)) =\Psi_{G,\psi}(\nu)$.
 \item[ii)] $\varrho(\ee)$
is a minimum feedback vertex set of $(\LL(G),\psi_L)$ with $\Psi_{\LL(G),\psi_L}(\varrho(\ee)) = \Omega_{G,\omega}(\ee)$.
\end{enumerate}
\end{proposition}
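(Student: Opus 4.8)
The plan is to prove both statements from a single template: for each construction I would exhibit a weight--preserving bijection between the directed cycles of $G$ and those of the target graph, read off from it that feedback arc/vertex sets of $G$ correspond exactly to feedback vertex/arc sets of the target, and then upgrade this set--level correspondence to a correspondence of \emph{minimum} solutions. Part $ii)$ is the clean prototype, so I would dispatch it first; part $i)$ needs one extra argument to eliminate the $\infty$--weighted arcs, and that is where the real content lies.

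For $ii)$, recall $V_L=E$ and that $(e,f)\in E_L$ precisely when $e,f$ are consecutive, i.e.\ $\head(e)=\tail(f)$. A directed cycle of $G$ is by definition a list $e_1,\dots,e_n$ of consecutive arcs that closes up ($\head(e_n)=\tail(e_1)$), and reading the same list as a vertex sequence $e_1\to e_2\to\dots\to e_n\to e_1$ is exactly a directed cycle of $\LL(G)$; conversely every cycle of $\LL(G)$ arises this way. This is a bijection on cycles under which the arc $e_i$ of $G$ is identified with the vertex $e_i=\varrho(e_i)$ of $\LL(G)$. Hence a cycle of $G$ meets $\ee$ in an arc iff its image meets $\varrho(\ee)$ in a vertex, so $\ee$ is a feedback arc set of $G$ iff $\varrho(\ee)$ is a feedback vertex set of $\LL(G)$. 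Since $\psi_L(e)=\omega(e)$, the weights coincide, $\sum_{e\in\ee}\psi_L(e)=\sum_{e\in\ee}\omega(e)=\Omega_{G,\omega}(\ee)$, and as $\varrho$ is a weight--preserving bijection between the two families of feasible solutions it carries minima to minima, which gives $ii)$.

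For $i)$ I would first make the cycle correspondence in $G^*$ explicit by tracing the gadgets. The gadget attached to each $v$ carries a single finite--weight arc $f_v\colon u_v\to w_v$ of weight $\psi(v)$, which is the unique bottleneck a traversal must cross when passing from an incoming arc--vertex in $\cev N_E(v)$ to an outgoing arc--vertex in $\vec N_E(v)$, all remaining gadget arcs having weight $\infty$; moreover each arc--vertex $a$ has a unique out--neighbour $u_{\head(a)}$ and a unique in--neighbour $w_{\tail(a)}$. Tracing this shows every directed cycle of $G^*$ has the shape $a_1\to u_{v_2}\to w_{v_2}\to a_2\to\dots\to a_n\to u_{v_1}\to w_{v_1}\to a_1$ with $v_{i+1}=\head(a_i)=\tail(a_{i+1})$, so it projects onto the directed cycle $a_1,\dots,a_n$ of $G$ and uses exactly the arcs $\{f_v : v \text{ visited by the cycle}\}$; conversely each cycle of $G$ lifts uniquely in this manner. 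Consequently a cycle of $G^*$ is hit by $\tau(\nu)=\{f_v:v\in\nu\}$ iff it visits some $v\in\nu$, so $\tau(\nu)$ is a feedback arc set of $G^*$ iff $\nu$ is a feedback vertex set of $G$, and its total weight is $\sum_{v\in\nu}\psi(v)=\Psi_{G,\psi}(\nu)$.

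The upgrade to minimality is the step I expect to be the main obstacle, and it is exactly what the $\infty$--weights are designed for. Because $\{f_v:v\in V\}$ is already a feedback arc set of finite weight $\sum_{v}\psi(v)$, any \emph{minimum} feedback arc set $\delta^*$ of $G^*$ has finite $\omega^*$--weight and therefore contains none of the $\infty$--arcs; thus $\delta^*=\{f_v:v\in S\}$ for some $S\subseteq V$, which by the correspondence is a feedback vertex set of $G$. This forces $\sum_{v\in S}\psi(v)=\Psi_{G,\psi}(S)\ge\Psi(G,\psi)$, while the feasible set $\tau(\nu)$ supplies the reverse inequality, so the minimum arc weight of $G^*$ equals $\Psi(G,\psi)$ and $\tau(\nu)$ attains it, i.e.\ $\tau(\nu)$ is a minimum feedback arc set of matching weight. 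The delicate points I anticipate are bookkeeping ones: verifying the gadget arc directions so that projection and lifting are genuinely mutually inverse, checking that distinct cycles of $G$ lift to distinct cycles of $G^*$ so that no cycle cover is lost, and treating loops separately, since a loop makes $a$ simultaneously in-- and out--incident to $v$ and should be confirmed to still force a crossing of $f_v$.
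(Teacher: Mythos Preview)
Your proposal is correct and follows essentially the same approach as the paper: establish a bijection between the elementary cycles of $G$ and those of $\LL(G)$ (respectively $G^*$), deduce that feedback arc/vertex sets correspond with equal weight, and conclude minimality. You are in fact more explicit than the paper about one point---namely, invoking the $\infty$-weights to force any minimum feedback arc set of $G^*$ to lie inside $\tau(V)$---which the paper's proof leaves implicit.
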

\begin{proof}
We show that the cycles $\Oe(G)$, $\Oe(\LL(G)$ and $\Oe(G^*)$ are in $1:1$ correspondence. Indeed,
the vertex set $\Vc(d) = \{e_1,\dots,e_n\} =:c_d$ of any cycle $d\in \Oe(\LL(G))$ induces exactly
one cycle $c_d \in \Oe(G)$. Vice versa since $\LL(G)$ is a digraph without multiple arcs we observe that the arc set $\Ec(c)$ of any cycle $c \in \Oe(G)$ induces exactly one cycle $d_c\in \Oe(\LL(G)$ with $\Vc(d_c)=\Ec(c)$.
Analogously, we note that a cycle $d \in \Oe(G^*)$ is uniquely determined by knowing all its arcs $A(d)=\{f_v \in \Ec(d)\,, v \in V\}$ see Definition \ref{def:dualG}. Since $G^*$ possesses no multiple arcs this implies again that
the vertex set $\Vc(c)$ of any cycle $c \in \Oe(G)$ induces exactly one cycle $d_c \in \Oe(G^*)$ with $A(d_c)=\{f_v \in \Ec(d)\,, v \in \Vc(c)\}$. Vice versa, for any cycle $d \in \Oe(G^*)$ we have that
the vertex set $\Vc(d) \setminus \Vc(A(d))$ induces exactly one cycle $c_d \in \Oe(G)$ with $\Ec(c_d)=\Vc(d) \setminus \Vc(A(d))$. Hence $\Oe(\LL(G))\cong\Oe(G)\cong\Oe(G^*)$.
Consequently, for any $\ee \subseteq E$, $\nu \subseteq V$ there holds
\begin{align}
 \Ec(d) \cap \ee \not= \emptyset\,, \,\,\, \forall \, d\in \Oe(G) \quad  &\Longleftrightarrow \quad \Vc(c) \cap \varrho(\ee)\not= \emptyset\,, \,\,\,  \forall\, c\in \Oe(\LL(G)) \label{feedset1} \\
  \Vc(d) \cap \nu \not= \emptyset\,, \,\,\, \forall \, d\in \Oe(G) \quad &\Longleftrightarrow  \quad  \Ec(c) \cap \tau(\nu)\not= \emptyset\,, \,\,\,  \forall\, c\in \Oe(G^*)  \label{feedset2}
\end{align}
Since the identities $\Psi_{G^*,\omega^*}(\tau(\nu)) =\Psi_{G,\psi}(\nu)$, $\Psi_{\LL(G),\psi_L}(\varrho(\ee)) = \Omega_{G,\omega}(\ee)$ are a direct consequence of the definitions above due to \eqref{feedset1},\eqref{feedset2}
any $\ee \subseteq E$, $\nu \subseteq V$ is a minimum feedback arc/vertex set w.r.t. $G$ if and only if $\varrho(\ee) \subseteq E$, $\tau(\nu) \subseteq V$  is a minimum feedback vertex/arc set w.r.t. $\LL(G)$, $G^*$, respectively.
 \end{proof}

Consequently, we obtain the following well-known statement.

\begin{theorem} The (weighted) directed FVSP \& FASP are APX--hard.
 \label{hard}
\end{theorem}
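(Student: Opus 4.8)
The plan is to derive APX-hardness by combining a direct reduction from the \emph{minimum vertex cover problem} (MVCP) with the $L$-reducibility between the two problems already established in Proposition~\ref{Lredu}. Since the unweighted versions arise as the special cases $\omega\equiv 1$, $\psi\equiv 1$ of the weighted ones, it suffices to prove APX-hardness of the unweighted directed FVSP: the weighted FVSP statement then follows a fortiori, and both FASP statements follow by transporting the hardness through Proposition~\ref{Lredu}. I would therefore start from the fact that unweighted MVCP is APX-hard (indeed inapproximable below $r\approx 1.3606$ unless P$=$NP, \cite{Dinur}), and build an approximation-preserving reduction from it.

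First I would give the construction. From an undirected graph $H=(V,E)$, viewed as an MVCP instance, I form the digraph $G$ on the same vertex set by taking both orientations of each edge, i.e.\ for every $\{u,v\}\in E$ the arcs $(u,v)$ and $(v,u)$. By Definition~\ref{graph}~i) this is a genuine digraph (the ordered pairs are distinct, so the associated map $H$ is injective), and $G$ is built in linear time. The combinatorial heart of the argument is the claim that the feedback vertex sets of $G$ coincide \emph{exactly} with the vertex covers of $H$.

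Next I would verify this identity in both directions. If $S\subseteq V$ is a feedback vertex set of $G$, then $S$ must in particular destroy every digon, i.e.\ every directed $2$-cycle $\{(u,v),(v,u)\}$; hence $u\in S$ or $v\in S$ for each edge, so $S$ is a vertex cover. Conversely, if $S$ is a vertex cover then every arc of $G$ stems from a covered edge and is thus incident to $S$, so deleting $S$ (which, per the convention after Definition~\ref{def:fasp}, removes all connected arcs) leaves a graph with no arcs at all, which is trivially acyclic; hence $S$ is a feedback vertex set. The two families of sets being literally equal, the minimum feedback vertex length of $G$ equals the minimum vertex cover size of $H$, and the identity map on vertex sets serves as the solution-lifting map. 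This is an $L$-reduction with constants $\alpha=\beta=1$ (optima match and costs are preserved exactly), so the directed FVSP is APX-hard, and hence so is its weighted generalization.

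Finally I would transfer the hardness to the FASP. By Proposition~\ref{Lredu}~i) the map $\tau$ carries a minimum feedback vertex set of $(G,\psi)$ to a minimum feedback arc set of the dual digraph $(G^{*},\omega^{*})$ of equal weight, and the cycle correspondence $\Oe(G)\cong\Oe(G^{*})$ together with \eqref{feedset2} makes this weight- and approximation-preserving once one notes that any feedback arc set of bounded weight avoids the $\infty$-weight arcs and therefore descends to a feedback vertex set of $G$. Composing with the bidirection reduction yields an $L$-reduction from MVCP to the directed FASP, so both problems are APX-hard in the weighted case as well. I expect the only genuinely delicate point to be confirming that the reductions are bona fide $L$-reductions rather than mere Karp reductions, i.e.\ that near-optimal solutions correspond with bounded loss; for the bidirection step this is immediate because the correspondence is exact, and for the FVSP-to-FASP step it is exactly what Proposition~\ref{Lredu} supplies, leaving only the routine bookkeeping of invoking it with the correct weight function $\omega^{*}$.
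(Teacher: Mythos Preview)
Your proof is correct and follows essentially the same route as the paper: both establish APX--hardness of the directed FVSP via the vertex--cover reduction and then transport it to the FASP through Proposition~\ref{Lredu} and the dual digraph $G^*$. The only difference is cosmetic---the paper cites \cite{Karp:1972} for the MVCP$\to$FVSP step, whereas you spell out the standard digon construction explicitly (and are in fact slightly more careful than the paper in noting why a bounded--weight feedback arc set of $G^*$ must avoid the $\infty$--weighted arcs and hence descends via $\tau^{-1}$).
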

\begin{proof}
Due to \cite{papa}
the \emph{minimum vertex cover problem} (MVCP) is known to be MAX SNP--complete. Since the class of APX--complete problem is given as the closure of MAX SNP under PTAS \cite{kann1992} the MVCP is APX--complete.
Already in \cite{Karp:1972} an approximation preserving $L$--reduction from the MVCP to the directed FVSP is constructed.

The graphs $\LL(G)$ and $G^*$ can be constructed from $G$ in polynomial time. Further, due to Lemma \ref{Lredu} the maps $\tau,\varrho$ from \eqref{maps} induce an approximation preserving
$L$--reduction from the FVSP to the FASP and vice versa, i.e.,
the FASP on $G$ is equivalent to the FVSP on $\LL(G)$ and the FVSP on $G$ is equivalent to the FASP on $G^*$. The second reduction implies the APX--hardness of the FASP. Since the weighted versions include the
case of constant weights the statement is proven.
\end{proof}

\begin{remark}
The reduction from MVCP to FVSP in \cite{Karp:1972} can be adapted also for the undirected FVSP.
Due to \cite{Dinur} the MVCP can not be approximated in polynomial time beneath a ratio of $r=10\sqrt{5} -21 \approx 1.3606$, unless P=NP. In light of this fact, and due to the circumstance
that the $L$--reductions from the MVCP to the FVSP and to the FASP are all approximation preserving the FVSP and the FASP
are also not polynomial time approximable beneath that ratio. If the \emph{unique games conjecture} is true then it is even impossible to approximate all three problems efficiently  beneath a ratio of $r=2$ \cite{khot}.
\end{remark}

\section{Previous Results}
\label{app:FASP}
We deliver the outstanding proofs of the statements in section \ref{sec:Iso}. As already mentioned these statements were already proven in our previous work \cite{FASP} and are given here in a simplified version.

\begin{theorem} \label{prop:appendix} Let $G=(V,E,\head,\tail)$ be a multi--digraph with arc weight $\omega : E \lo \R^+$.
\begin{enumerate}
\item[i)] There exist algorithms computing the subgraph $G_e$ in $\Oc(|E|^2+|V|)$ and $I_e$ in $\Oc(|E|+|V|)$.
 \item[ii)] If $\ee \in \mathcal{F}_E(G,\omega)$ is a minimum feedback arc set with  $e \in \ee$ then $\vec F(e) \subseteq \ee$.
 \item[iii)] If $I_e\not = \emptyset$  and  $\delta=\mathrm{mincut}(\head(e),\tail(e),I_e,\omega_{| I_e}) \in \Pc(E) $ be a minimum--$s$--$t$--cut with $s = \head(e)$, $t =\tail(e)$  w.r.t. $I_e, \omega_{|I_e}$ such that:
\begin{equation}\label{MCA}
 \Omega_{G,\omega}(\delta) \geq \Omega_{G,\omega}(\vec F(e)) \,.
\end{equation}
Then there is $\ee \in \mathcal{F}_E(G,\omega)$ with $\vec F(e) \subseteq \ee$.
\item[iv)] Checking whether $I_e \not =\emptyset$ and \eqref{MCA} holds can be done in $\Oc(|E||V|)$.
\end{enumerate}
\end{theorem}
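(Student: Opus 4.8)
The plan is to prove the four parts separately, with everything resting on the one structural fact that makes isolated cycles tractable: once the parallel class $\vec F(e)$ is deleted, the region carrying the isolated cycles becomes acyclic. For $i)$ I would compute $I_e$ by working in $H:=G\setminus\vec F(e)$ and running a single pass of Tarjan's algorithm to mark every vertex lying on a cycle of $H$ (those in a non-trivial strongly connected component or carrying a loop); call these \emph{blocked}. By the definition of isolation a cycle through $e$ is isolated exactly when it avoids every blocked vertex, so $I_e=\emptyset$ as soon as $\head(e)$ or $\tail(e)$ is blocked. Otherwise I delete the blocked vertices; the surviving part of $H$ is acyclic, so every directed path from $\head(e)$ to $\tail(e)$ in it is automatically vertex-simple and, closed up with a parallel arc, yields an elementary isolated cycle. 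Hence $\Ec(I_e)$ is $\vec F(e)$ together with the arcs $(x,y)$ for which $x$ is reachable from $\head(e)$ and $\tail(e)$ is reachable from $y$, located by two linear traversals, giving the claimed $\Oc(|E|+|V|)$. For $G_e$ this shortcut is unavailable, since cycles through $e$ may interleave with other cycles and a plain reachability test would wrongly admit arcs lying only on non-simple closed walks through $e$; the honest point is to decide, arc by arc, membership on an \emph{elementary} cycle through $e$ by a linear reachability/pruning step, yielding $\Oc(|E|^2+|V|)$ overall. This per-arc elementarity test is the fiddly part and is the computation recaptured from \cite{FASP}.

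For $ii)$ I would argue by exchange. If some $f\in\vec F(e)$ with $f\neq e$ were missing from $\ee$, then $\ee\setminus\{e\}$ would still be a feedback set: any cycle it failed to break would have to use $e$, hence be $e$ together with a simple path from $\head(e)$ to $\tail(e)$ lying in $G\setminus\ee$ (such a path cannot reuse $e$, since it would have to revisit its starting vertex $\head(e)$); replacing $e$ by the parallel arc $f$ then produces a cycle entirely inside $G\setminus\ee$, contradicting its acyclicity. So $\ee\setminus\{e\}$ is a feedback set of strictly smaller weight (as $\omega>0$), contradicting minimality. Therefore $\vec F(e)\subseteq\ee$.

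For $iii)$, the core statement, fix any $\ee_0\in\mathcal{F}_E(G,\omega)$ and set $\ee_0':=\ee_0\cap\Ec(I_e)$. The first step shows $\ee_0'$ alone must destroy every cycle of $I_e$: each arc of $\Ec(I_e)$ lies on an isolated cycle, and an isolated cycle only uses arcs of $I_e$, so no arc outside $\Ec(I_e)$ can break it; thus $\ee_0'$ is a feedback arc set of $I_e$. The second step computes the cheapest way to make $I_e$ acyclic. Since $I_e$ is the backward class $\vec F(e)$ together with a DAG of paths from $\head(e)$ to $\tail(e)$, one must either delete all of $\vec F(e)$, at cost $\Omega_{G,\omega}(\vec F(e))$, or cut every such path, i.e.\ take an $s$--$t$ cut, at cost at least $\Omega_{G,\omega}(\delta)$; any mixed strategy is dominated by one of these two, so the minimum feedback weight of $I_e$ equals $\min(\Omega_{G,\omega}(\vec F(e)),\Omega_{G,\omega}(\delta))$, which by \eqref{MCA} is $\Omega_{G,\omega}(\vec F(e))$. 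Hence $\Omega_{G,\omega}(\ee_0')\ge\Omega_{G,\omega}(\vec F(e))$. Finally I exchange, setting $\ee:=(\ee_0\setminus\ee_0')\cup\vec F(e)$: a cycle surviving in $G\setminus\ee$ would avoid $\vec F(e)$, hence live in $G\setminus\vec F(e)$, hence by isolation (Remark \ref{rem:hierachy}) be vertex-disjoint from every isolated cycle and so avoid $\ee_0'\subseteq\Ec(I_e)$ as well; avoiding $\ee_0\setminus\ee_0'$ too, it would avoid all of $\ee_0$, contradicting that $\ee_0$ is a feedback arc set. Thus $\ee$ is a feedback arc set, and $\Omega_{G,\omega}(\ee)=\Omega_{G,\omega}(\ee_0)-\Omega_{G,\omega}(\ee_0')+\Omega_{G,\omega}(\vec F(e))\le\Omega(G,\omega)$, forcing equality, so $\ee\in\mathcal{F}_E(G,\omega)$ with $\vec F(e)\subseteq\ee$. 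The \textbf{main obstacle} is precisely this exchange: both the claim that $\ee_0'$ is a feedback set of $I_e$ and the claim that swapping $\ee_0'$ for $\vec F(e)$ breaks no other cycle must be pinned down, and both hinge on the vertex-disjointness built into the definition of isolated cycles.

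For $iv)$, deciding $I_e\neq\emptyset$ is the linear computation of $i)$. To test \eqref{MCA} I would not compute $\delta$ exactly but only certify whether the minimum $\head(e)$--$\tail(e)$ cut of $I_e$ has weight at least $\Omega_{G,\omega}(\vec F(e))$; this is a single max-flow/min-cut value check, which by the algorithms of \cite{KRT,orlin} runs in $\Oc(|E||V|)$. In the unweighted digraph case $\vec F(e)=\{e\}$ has weight $1$, and $I_e\neq\emptyset$ already supplies a path from $\head(e)$ to $\tail(e)$, so the cut has weight at least $1$ and the test is automatic, matching the remark in Algorithm \ref{alg:ISO}.
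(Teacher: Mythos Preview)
Your overall strategy coincides with the paper's: SCC/reachability for $i)$, an exchange argument for $ii)$, an exchange of $\ee_0'$ for $\vec F(e)$ for $iii)$, and a single max-flow call for $iv)$. For $iii)$ your write-up is in fact considerably more explicit than the paper's very terse two lines and makes the exchange mechanism transparent.

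There is, however, a genuine slip stemming from a misreading of the definition of isolation. In the paper a cycle is a list of \emph{arcs}, so ``$c\cap c'=\emptyset$'' in Definition~\ref{def:iso} means \emph{arc}-disjointness, not vertex-disjointness. Your algorithm for $I_e$ marks and deletes blocked \emph{vertices}; this loses isolated cycles that pass through a vertex lying on some cycle of $G\setminus\vec F(e)$ without sharing an arc with it. Concretely, take $e=(t,h)$, arcs $h\to v$, $v\to t$, and an independent triangle $v\to a\to b\to v$: the cycle $\{e,h\!\to\! v,v\!\to\! t\}$ is isolated (arc-disjoint from the triangle), yet your procedure deletes $v$ and returns $I_e=\emptyset$. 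The paper avoids this by deleting \emph{arcs}: set $H(e)=\{f\in E: f$ lies on a cycle of $G\setminus e\}$ (read off from the SCCs of $G\setminus e$) and take the SCC of $(G\setminus H(e))\cup\{e\}$ containing $e$. The same misreading resurfaces in your exchange step for $iii)$ where you invoke ``vertex-disjoint''; the argument only needs arc-disjointness, which is exactly what the definition gives, so that part is fine once reworded.

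A second, smaller gap appears in $iii)$ when $|\vec F(e)|>1$. The graph $I_e$ contains $e$ but not its parallel copies (an elementary cycle cannot use two parallel arcs), so the minimum feedback weight of $I_e$ is $\min\!\big(\omega(e),\Omega_{G,\omega}(\delta)\big)$, not $\min\!\big(\Omega_{G,\omega}(\vec F(e)),\Omega_{G,\omega}(\delta)\big)$. Hence from ``$\ee_0'$ is a FAS of $I_e$'' you only get $\Omega_{G,\omega}(\ee_0')\ge\omega(e)$, which is too weak. The fix is to first invoke $ii)$: either $\vec F(e)\subseteq\ee_0$ and you are done, or (again by $ii)$, applied to any $f\in\vec F(e)$) $\vec F(e)\cap\ee_0=\emptyset$; in the latter case $\ee_0'$ must sever every $\head(e)$--$\tail(e)$ path in $I_e$, so $\Omega_{G,\omega}(\ee_0')\ge\Omega_{G,\omega}(\delta)\ge\Omega_{G,\omega}(\vec F(e))$, and your exchange goes through.
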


  \begin{proof}  We show $i)$.
  Certainly, there has to be a directed path $p$ from $\head(e)$ to $\tail(j)$ and from $\head(j)$ to $\tail(e)$ for every arc $j \in \Ec(G_e)$.
 If $c \in O(G)\setminus \Oe(G)$ with $e \in \Ec(c)$ is a non--elementary cycle passing through $e$ then there is at least one arc $j \in \Ec(c)$ such that  $\head(j)$ or $\tail(j)$ are passed twice by $c$.
Hence, either there is no directed path $p$ from  $\head(e)$ to $\tail(j)$ in $G \setminus \vec N(\head(j))$ or there is no directed path $p$ from  $\head(j)$ to $\tail(e)$ in $G \setminus \cev N(\tail(j))$.
We denote with $J(c)$ all such arcs. If $f$ is an arc of an elementary cycle $c \in \Oe(G)$ then none of the cases occur, i.e., $f \not \in J(c)$, see Figure~\ref{fig:Ge}. Thus,
determining $J(c)$ can be done by running depth first search (DFS) at most $|E|$ times requiring $\Oc(|E|^2)$ operations. The strongly connected component $G'= \mathrm{SCC}_e(G)$ of $G \setminus J(c)$ that includes $\head(e)$ and $\tail(e)$ therefore coincides with $G_e$ and can be
be determined in $\Oc(|E|+|V|)$, \cite{SCC}. Now, we consider the set
$H(e) = \{\ f \in E_e \mi \Oe(f) \not = \emptyset \,\, \text{w.r.t.} \,\, G\setminus e \}$, which can be determined  by computing the SCCs of $G\setminus e$.
The SCC of $(G \setminus H(e))\cup\{e\}$ that includes $e$ yields $I_e$
finishing the proof.

We prove $ii)$. Let $\ee \in \mathcal{F}_E(G,\omega)$ and $e \in \ee$. Assume there is $f \in
\vec F(e)\setminus \ee$  then certainly $\ee \cap \cev F(e)= \cev F(e)$ otherwise there
would be a two--cycle that is not cut. Since $\ee \in \mathcal{F}_E(G,\omega)$ we have
$\Ec(c) \cap \ee \not = \emptyset$ for all $c \in \Oe(f)$
implying
$\Ec(c) \cap (\ee\setminus\{e\}) \not = \emptyset $ for all $c \in \Oe(e)$
contradicting the minimality of $\ee$. Thus, $\ee \cap \vec F(e) =\vec F(e)$.

To see $iii)$ we recall that by Remark \ref{rem:hierachy} there is no arc $f \in E$ with $I_f \supseteq I_e$ and $\omega(f) < \omega(e)$. On the other hand every other arc $f$ with $I_e=I_f$ satisfies $G_e =G_f$.
Due to  $G_e \supseteq I_e$ this implies that  if  $\Omega_{G,\omega}(\delta) \geq \omega(\vec F(e))$ then
$$\vec F(e) \in \mathcal{F}(I_e,\omega_{|Ie}) \quad \text{and by $i)$} \quad \Omega(G,\setminus \delta, \omega) \geq \Omega(G\setminus \vec F(e) ) \,.$$
Hence, we have proven $iii)$.

An exhaustive list of polynomial time algorithms with runtime complexity contained in $\Oc(|E|^3)$
computing  minimum--$s$--$t$--cuts is given in
\cite{MC,Goldberg2014}. Especially, for the unweighted case in \cite{Goldberg2014} an algorithm with $\Oc(|E||V|)$ or even faster is presented. A combination of \cite{KRT} and \cite{orlin} ensures that complexity
also for the weighted version.
Due to $i)$ this shows $iv)$.
\end{proof}

\bibliographystyle{plain}
\bibliography{Ref.bib}

\begin{thebibliography}{10}

\bibitem{Alpert1998}
Charles~J. Alpert.
\newblock {The ISPD98 circuit benchmark suite}.
\newblock In {\em Proceedings of the 1998 international symposium on Physical
  design, ISPD '98}, pages 80--85, New York, New York, USA, 1998. ACM Press.

\bibitem{Arora}
Sanjeev Arora and Boaz Barak.
\newblock {\em Computational Complexity: A Modern Approach}.
\newblock Cambridge University Press, New York, NY, USA, 1st edition, 2009.

\bibitem{ausiello}
Giorgio Ausiello, Alessandro D'Atri, and Marco Protasi.
\newblock Structure preserving reductions among convex optimization problems.
\newblock {\em Journal of Computer and System Sciences}, 21(1):136--153, 1980.

\bibitem{bafna}
Vineet Bafna, Piotr Berman, and Toshihiro Fujito.
\newblock A 2-approximation algorithm for the undirected feedback vertex set
  problem.
\newblock {\em SIAM Journal on Discrete Mathematics}, 12(3):289--297, 1999.

\bibitem{exact}
Ali Baharev, Hermann Schichl, Arnold Neumaier, and Tobias Achterberg.
\newblock An exact method for the minimum feedback arc set problem.
\newblock {\em University of Vienna}, 10:35--60, 2015.

\bibitem{bang}
J\"urgen Bang-Jensen and Gregory~Z. Gutin.
\newblock {\em Digraphs: Theory, Algorithms and Applications}.
\newblock Springer Publishing Company, Incorporated, 2nd edition, 2008.

\bibitem{bao}
Yu~Bao, Morihiro Hayashida, Pengyu Liu, Masayuki Ishitsuka, Jose~C Nacher, and
  Tatsuya Akutsu.
\newblock Analysis of critical and redundant vertices in controlling directed
  complex networks using feedback vertex sets.
\newblock {\em Journal of Computational Biology}, 25(10):1071--1090, 2018.

\bibitem{becker}
Ann Becker and Dan Geiger.
\newblock Optimization of {P}earl's method of conditioning and greedy-like
  approximation algorithms for the vertex feedback set problem.
\newblock {\em Artificial Intelligence}, 83(1):167--188, 1996.

\bibitem{logic}
Lubomir Bic and Alan~C. Shaw.
\newblock {\em The logical design of operating systems}.
\newblock Prentice-Hall, Inc., 1988.

\bibitem{FPGA2}
Michael Caffrey, Manuel Echave, Charles Fite, Tony Nelson, Anthony Salazar, and
  Steven Storms.
\newblock A space-based reconfigurable radio.
\newblock In {\em Proceedings of the international conference on engineering of
  reconfigurable systems and algorithms (ERSA)}, pages 49--53. Citeseer, 2002.

\bibitem{Chen}
Jianer Chen, Yang Liu, Songjian Lu, Barry O'{S}ullivan, and Igor Razgon.
\newblock A fixed-parameter algorithm for the directed feedback vertex set
  problem.
\newblock {\em Journal of the ACM}, 55(5):21:1--21:19, 2008.

\bibitem{cormen}
Thomas~H. Cormen, Charles~E. Leiserson, Ronald~L. Rivest, and Clifford Stein.
\newblock Introduction to algorithms, {MIT} {P}ress and {M}c{G}raw-{H}ill,
  2001.
\newblock {\em ISBN}, 262032937:636--640, 2001.

\bibitem{crescenzi1995compendium}
Pierluigi Crescenzi, Viggo Kann, M~Halld{\'o}rsson, and M~Karpinski.
\newblock A compendium of {NP}-optimization problems, 1995.

\bibitem{Crick1998}
Francis Crick and Christof Koch.
\newblock Constraints on cortical and thalamic projections: the no-strong-loops
  hypothesis.
\newblock {\em Nature}, 391(6664):245--250, 1998.

\bibitem{IGR}
G\'{a}bor Cs\'{a}rdi and Tam\'{a}s Nepusz.
\newblock The igraph software package for complex network research.
\newblock {\em InterJournal Complex Systems}, 1695(5):1--9, 2006.

\bibitem{Dasdan2004}
Ali Dasdan.
\newblock Experimental analysis of the fastest optimum cycle ratio and mean
  algorithms.
\newblock {\em ACM Transactions on Design Automation of Electronic Systems},
  9(4):385--418, 2004.

\bibitem{Dinur}
Irit Dinur and Samuel Safra.
\newblock On the hardness of approximating minimum vertex cover.
\newblock {\em Annals of Mathematics}, 162:2005, 2004.

\bibitem{Eades1993}
Peter Eades, Xuemin Lin, and W.~F. Smyth.
\newblock A fast and effective heuristic for the feedback arc set problem.
\newblock {\em Information Processing Letters}, 47(6):319--323, oct 1993.

\bibitem{model}
P{\'{a}}l Erd{\H{o}}s and Alfr{\'{e}}d R{\'{e}}nyi.
\newblock {On random graphs I.}
\newblock {\em Publicationes Mathematicae}, 6:290--297, 1959.

\bibitem{Even:1998}
G.~Even, J.~(Seffi)~Naor, B.~Schieber, and M.~Sudan.
\newblock Approximating minimum feedback sets and multicuts in directed graphs.
\newblock {\em Algorithmica}, 20(2):151--174, 1998.

\bibitem{chip}
Pietro Fezzardi, Marco Lattuada, and Fabrizio Ferrandi.
\newblock Using efficient path profiling to optimize memory consumption of
  on-chip debugging for high-level synthesis.
\newblock {\em ACM Transactions on Embedded Computing Systems (TECS)},
  16(5s):1--19, 2017.

\bibitem{flood}
Merrill~M. Flood.
\newblock Exact and heuristic algorithms for the weighted feedback arc set
  problem: A special case of the skew-symmetric quadratic assignment problem.
\newblock {\em Networks}, 20(1):1--23, 1990.

\bibitem{MC}
Andrew~V. Goldberg and Robert~E. Tarjan.
\newblock A new approach to the maximum-flow problem.
\newblock {\em Journal of the ACM}, 35(4):921--940, 1988.

\bibitem{Goldberg2014}
Andrew~V. Goldberg and Robert~E. Tarjan.
\newblock {Efficient maximum flow algorithms}.
\newblock {\em Communications of the ACM}, 57(8):82--89, 2014.

\bibitem{Groetschel:1985}
Martin Gr\"{o}tschel, Michael J\"{u}nger, and Gerhard Reinelt.
\newblock On the acyclic subgraph polytope.
\newblock {\em Mathematical Programming}, 33(1):28--42, 1985.

\bibitem{gupta}
Rajesh Gupta and Melvin~A. Breuer.
\newblock Ballast: A methodology for partial scan design.
\newblock In {\em The Nineteenth International Symposium on Fault-Tolerant
  Computing. Digest of Papers}, pages 118--125. IEEE, 1989.

\bibitem{hassin}
Refael Hassin and Shlomi Rubinstein.
\newblock Approximations for the maximum acyclic subgraph problem.
\newblock {\em Information processing letters}, 51(3):133--140, 1994.

\bibitem{FASP}
Michael Hecht.
\newblock Exact localisations of feedback sets.
\newblock {\em Theory of Computing Systems}, 62(5):1048--1084, 2018.

\bibitem{IGraphM}
{\relax Sz}abolcs Horv\'{a}t.
\newblock {IGraph/M---the igraph interface for Mathematica (version 0.4)},
  2020.

\bibitem{VLSI}
Anand~V. Hudli and Raghu~V. Hudli.
\newblock Finding small feedback vertex sets for {VLSI} circuits.
\newblock {\em Microprocessors and Microsystems}, 18(7):393--400, 1994.

\bibitem{greedy}
Iaroslav Ispolatov and Sergei Maslov.
\newblock Detection of the dominant direction of information flow and feedback
  links in densely interconnected regulatory networks.
\newblock {\em BMC Bioinformatics}, 9(1):424, 2008.

\bibitem{FPGA}
Jonathan Johnson and Michael Wirthlin.
\newblock Voter insertion techniques for fault tolerant fpga design.
\newblock 2009.

\bibitem{kahn}
Arthur~B. Kahn.
\newblock Topological sorting of large networks.
\newblock {\em Communications of the ACM}, 5(11):558--562, 1962.

\bibitem{kann1992}
Viggo Kann.
\newblock {\em On the approximability of NP-complete optimization problems}.
\newblock PhD thesis, Royal Institute of Technology Stockholm, 1992.

\bibitem{Karp:1972}
Richard~M. Karp.
\newblock Reducibility among combinatorial problems.
\newblock In R.~E. Miller and J.~W. Thatcher, editors, {\em Complexity of
  Computer Computations}, pages 85--103, 1972.

\bibitem{tour}
Claire Kenyon-Mathieu and Warren Schudy.
\newblock How to rank with few errors.
\newblock In {\em Proceedings of the Thirty-ninth Annual ACM Symposium on
  Theory of Computing}, STOC '07, pages 95--103, New York, NY, USA, 2007. ACM.

\bibitem{khot}
Subhash Khot and Oded Regev.
\newblock Vertex cover might be hard to approximate to within $2 -
  \varepsilon$.
\newblock {\em Journal of Computer and System Sciences}, 74(3):335--349, 2008.

\bibitem{KRT}
Valerie King, Satish Rao, and Rorbert Tarjan.
\newblock A faster deterministic maximum flow algorithm.
\newblock {\em Journal of Algorithms}, 17(3):447--474, 1994.

\bibitem{kunzmann}
Arno Kunzmann and Hans-Joachim Wunderlich.
\newblock An analytical approach to the partial scan problem.
\newblock {\em Journal of Electronic Testing}, 1(2):163--174, 1990.

\bibitem{acki2013}
Jakub {\L}{\c{a}}cki.
\newblock Improved deterministic algorithms for decremental reachability and
  strongly connected components.
\newblock {\em ACM Transactions on Algorithms}, 9(3):1--15, jun 2013.

\bibitem{leiserson}
Charles~E. Leiserson and James~B. Saxe.
\newblock Retiming synchronous circuitry.
\newblock {\em Algorithmica}, 6(1-6):5--35, 1991.

\bibitem{Lucchesi:1978}
C.~L. Lucchesi and D.~H. Younger.
\newblock A minimax theorem for directed graphs.
\newblock {\em Journal of the London Mathematical Society 2}, 17(3):369--374,
  1978.

\bibitem{Markov2014}
Nikola~T. Markov, Julien Vezoli, Pascal Chameau, Arnaud Falchier, Ren{\'{e}}
  Quilodran, Cyril Huissoud, Camille Lamy, Pierre Misery, Pascale Giroud,
  Shimon Ullman, Pascal Barone, Colette Dehay, Kenneth Knoblauch, and Henry
  Kennedy.
\newblock {Anatomy of hierarchy: Feedforward and feedback pathways in macaque
  visual cortex}.
\newblock {\em Journal of Comparative Neurology}, 522(1):225--259, 2014.

\bibitem{marti}
R.~Mart{\'\i} and G.~Reinelt.
\newblock {\em The Linear Ordering Problem: Exact and Heuristic Methods in
  Combinatorial Optimization}.
\newblock Applied Mathematical Sciences. Springer, 2011.

\bibitem{orlin}
James~B. Orlin.
\newblock Max flows in {O(nm)} time, or better.
\newblock In {\em Proceedings of the forty-fifth annual ACM symposium on Theory
  of computing}, pages 765--774. ACM, 2013.

\bibitem{papa}
Christos~H. Papadimitriou and Mihalis Yannakakis.
\newblock Optimization, approximation, and complexity classes.
\newblock {\em Journal of computer and system sciences}, 43(3):425--440, 1991.

\bibitem{mars}
David Ratter.
\newblock {FPGAs} on {Mars}.
\newblock {\em Xcell J}, 50:8--11, 2004.

\bibitem{Saab}
Youssef Saab.
\newblock A fast and effective algorithm for the feedback arc set problem.
\newblock {\em Journal of Heuristics}, 7(3):235--250, 2001.

\bibitem{global}
H.~C. Shih, Predrag~G. Kovijanic, and Rahul Razdan.
\newblock A global feedback detection algorithm for {VLSI} circuits.
\newblock In {\em Proceedings., 1990 IEEE International Conference on Computer
  Design: VLSI in Computers and Processors}, pages 37--40. IEEE, 1990.

\bibitem{Silber}
Abraham Silberschatz, Peter~Baer Galvin, and Greg Gagne.
\newblock {\em Operating System Concepts}.
\newblock Wiley Publishing, 8th edition, 2008.

\bibitem{SCC}
Robert Tarjan.
\newblock Depth-first search and linear graph algorithms.
\newblock {\em SIAM journal on computing}, 1(2):146--160, 1972.

\bibitem{tarjan_sort}
Robert~Endre Tarjan.
\newblock Edge-disjoint spanning trees and depth-first search.
\newblock {\em Acta Informatica}, 6(2):171--185, 1976.

\bibitem{sagemath}
{The Sage Developers}.
\newblock {\em {S}ageMath, the {S}age {M}athematics {S}oftware {S}ystem
  ({V}ersion 8.9)}, 2019.
\newblock {\tt https://www.sagemath.org}.

\bibitem{unger}
Stephen~H. Unger.
\newblock A study of asynchronous logical feedback networks.
\newblock 1957.

\bibitem{Watts1998}
Duncan~J. Watts and Steven~H. Strogatz.
\newblock Collective dynamics of `small-world' networks.
\newblock {\em Nature}, 393(6684):440--2, 1998.

\bibitem{younger}
D.~Younger.
\newblock Minimum feedback arc sets for a directed graph.
\newblock {\em IEEE Transactions on Circuit Theory}, 10(2):238--245, 1963.

\end{thebibliography}

\end{document}